\newif\ifarxiv
\newif\ifpodc
\newtheorem{definition}{Definition}[section]
\newtheorem{claim}{Claim}
\newtheorem{theorem}{Theorem}
\newtheorem*{theorem*}{Theorem}
\newtheorem{proposition}[theorem]{Proposition}
\newtheorem{lemma}[theorem]{Lemma}
\newtheorem*{lemma*}{Lemma}
\newtheorem{remark}{Remark}[section]
\title{Breadth-First Depth-Next:  \protect\\ Optimal Collaborative Exploration  \protect\\of Trees with Low Diameter}
\newtheorem{claim}{Claim} 
\newtheorem*{lemma*}{Lemma} 
\theoremstyle{acmplain}
\newtheorem{remark}[theorem]{Remark}}
\title{Breadth-First Depth-Next: Optimal Collaborative Exploration of Trees with Low Diameter}
\newcommand{\Acal}{\mathcal{A}}
\newcommand{\Bcal}{\mathcal{B}}
\newcommand{\Dcal}{\mathcal{D}}
\newcommand{\Ocal}{\mathcal{O}}
\newcommand{\Tcal}{\mathcal{T}}
\newcommand{\Nbb}{\mathbb{N}}
\newcommand{\T}{\mathsf{T}}
\newcommand{\minkdel}{\min\{\log \Delta, \log k\}}
\newcommand{\depth}{\delta}
\newcommand{\card}[1]{\left|{#1}\right|}
\newcommand{\set}[1]{\left\{{#1}\right\}}
\newcommand{\ceil}[1]{{\lceil #1 \rceil}}
\newcommand{\floor}[1]{{\lfloor #1 \rfloor}}
\newcommand{\lca}{\text{LCA}}
\DeclareMathOperator{\DFBNmath}{\texttt{BFDN}}
\keywords{collaborative exploration, trees, depth, adversarial game}
\begin{document}

\ifarxiv
\author{
  Romain Cosson\\
  \small{\texttt{romain.cosson@inria.fr}}
  \and
  Laurent Massoulié\\
  \small{\texttt{laurent.massoulie@inria.fr}}
  \and
  Laurent Viennot\\
  \small{\texttt{laurent.viennot@inria.fr}}
}
\date{}
\maketitle
\fi

\begin{abstract}
We consider the problem of \textit{collaborative tree exploration} posed by Fraigniaud, Gasieniec, Kowalski, and Pelc~\citep{fraigniaud2006collective} where a team of $k$ agents is tasked to collectively go through all the edges of an unknown  tree as fast as possible.
Denoting by $n$ the total number of nodes and by $D$ the tree depth, the $\mathcal{O}(n/\log(k)+D)$ algorithm of \cite{fraigniaud2006collective} achieves the best-known competitive ratio with respect to the cost of offline exploration which is $\Theta(\max{\{2n/k,2D\}})$. Brass, Cabrera-Mora, Gasparri, and Xiao \cite{brass2011multirobot} consider an alternative performance criterion, namely the additive overhead with respect to $2n/k$, and obtain a $2n/k+\mathcal{O}((D+k)^k)$ runtime guarantee.
In this paper, we introduce `Breadth-First Depth-Next' (BFDN), a novel and simple algorithm that performs collaborative tree exploration in time $2n/k+\mathcal{O}(D^2\log(k))$, 
thus outperforming \cite{brass2011multirobot} for all values of $(n,D)$ and being order-optimal for all trees with depth $D=o_k(\sqrt{n})$.
Moreover, a recent result from 
\cite{disser2017general} implies that no exploration algorithm can achieve a $2n/k+\mathcal{O}(D^{2-\epsilon})$ runtime guarantee. The dependency in $D^2$ of our bound is in this sense optimal.
The proof of our result crucially relies on the analysis of an associated two-player game. We extend the guarantees of BFDN to: scenarios with limited memory and communication, adversarial setups where robots can be blocked, and exploration of classes of non-tree graphs. Finally, we provide a recursive version of BFDN with a runtime of $\mathcal{O}_\ell(n/k^{1/\ell}+\log(k) D^{1+1/\ell})$ for parameter $\ell\ge 1$, thereby improving performance for trees with large depth. 
\end{abstract}

\ifpodc
\maketitle
\fi

\section{Introduction}
\paragraph{Problem setting.}
We consider the problem of \textit{collaborative tree exploration} posed by \citep{fraigniaud2006collective} where a team of agents, or robots\footnote{these terms will be used indistinctly, but the term ``robots'' is often preferred in line with the initial work of \cite{fraigniaud2006collective}.} is tasked to collectively go through all the edges of a tree as fast as possible and then return to the root. At initialization, all robots are located at the root. At each round, every robot can move along one incident edge to reach a neighbour. Edges are revealed along with their unique identifier when a robot reaches one of their endpoints. 
We assume that robots can communicate and compute at no cost. In particular, at every instant, they share a common map of the sub-tree that has already been explored. 
Collaborative online tree exploration is a fundamental problem behind the exploration of an unknown physical environment with a team of agents. It has been intensively studied for the special case of trees~\cite{fraigniaud2006collective,higashikawa2014online,brass2011multirobot,dynia2006smart,dynia2007robots, dereniowski2015fast, OrtolfS14}. Despite almost two decades of research, we still do not completely understand how much exploration time can decrease with the number of agents.

\paragraph{Main results.} In this paper, we present a simple and novel algorithm that achieves collaborative tree exploration with $k$ agents in $\frac{2n}{k}+D^2(\min\{\log(k),\log(\Delta)\}+2)$ rounds for any tree with $n$ nodes, depth $D$ and maximum degree $\Delta$. 

The algorithm is called ``Breadth-First Depth-Next'' (abbreviated \texttt{BFDN}) and the behaviour of the robots can synthetically be described as follows: when located at the root, a robot is sent to a node which is adjacent to the highest unexplored edge (breadth-first mode). Upon arrival, the robot changes behaviour: it will choose to go through an unexplored edge if one is adjacent to its position and will go one step back towards the root otherwise (depth-next mode). 

The analysis of this algorithm involves a simple, yet non-trivial, zero-sum two-player game opposing a player and an adversary moving $k$ balls in $k$ urns. We show that a simple strategy for this game induces a cost of at most $k\min\{\log(k),\log(\Delta)\}+k$ to the player.

Our algorithm is easy to implement and can also be adapted to more complex settings, such as i) exploration of specific classes of non-tree graphs, ii) scenarios with constrained communications and memory including the classical local communication model, or iii) setups where an adversary chooses at each time step which robots are allowed to move. 

We provide simple extensions of \texttt{BFDN} for all three settings in Section~\ref{sec: other-settings}. Finally, in an attempt to improve dependence in $D$, we propose \texttt{BFDN}$_\ell$, a recursive version of \texttt{BFDN} that explores the tree in time $\Ocal_\ell(\frac{n}{k^{1/\ell}}+\min\{\log(k),\log(\Delta)\}D^{1+1/\ell})$ where $\ell\geq 1$ is some constant provided as input.

\paragraph{Useful context and related works.} In the case of a single robot, exploration can be performed optimally using the classical ``Depth First Search'' (\texttt{DFS}) strategy. This strategy can be implemented by having the robot always  go through an adjacent unexplored edge if one is available and go one step up towards the root otherwise. This strategy has the robot go through all edges of the tree exactly twice and return to the origin in a total of exactly $2(n-1)$ rounds, where $n$ is the number of nodes. Note that it can be implemented both \textit{offline} (if the tree is known in advance) and \textit{online} (if edges are revealed when their endpoint is reached). 

In the multi-robot setting, with $k\geq 2$, the best \textit{offline} exploration strategy takes at least $2(n-1)/k$ time-steps because all edges must be traversed at least twice by some robot (remember that robots must finish exploration at the root). Since nodes at depth $D$ must also be attained, we can lower-bound the time complexity of offline collaborative tree exploration by $\max \{2n/k,2D\} \geq n/k + D$. A simple algorithm by \cite{dynia2006power, OrtolfS14} matches this bound up to a factor $2$: consider a depth-first search path from the root of length $2(n-1)$, and divide it into $k$ subsets each of length $2(n-1)/k$ then assign each robot to go through one of these segments and go back to the origin in total time $2(n/k + D)$. The complexity of offline collaborative exploration is thus $\Theta(n/k + D)$. 
Interestingly, the optimal runtime is NP-hard to compute as \cite{fraigniaud2006collective} gave a reduction from this problem to  \texttt{3-PARTITION}.

To analyze the \textit{online} multi-robot exploration problem, the literature initially focused on the \textit{competitive ratio} which is the worst-case ratio between the cost of the online and the optimal offline algorithm. For an online algorithm $\mathcal{A}_k$ using $k\geq 2$ robots, this ratio is defined up to a constant factor as $\max_{n,D \in \Nbb}\max_{T\in \Tcal(n,D)}\texttt{Runtime}(T,\mathcal{A}_k)/(n/k + D)$ where $\Tcal(n,D)$ denotes the set of all trees with $n$ nodes and depth $D$. The algorithm proposed initially by \cite{fraigniaud2006collective} \texttt{CTE} (Collective Tree Exploration) has a runtime of $O(\frac{n}{\log k}+D)$ and therefore attains a competitive ratio of $O(\frac{k}{\log k})$. It was later shown by \cite{higashikawa2014online}  that the competitive analysis of \texttt{CTE} was tight as they provided a simple construction of a tree with $n = kD$ edges that \texttt{CTE} would take $\frac{Dk}{\log_2(k)}$ time-steps to explore.  To date, no algorithm is known to have a better competitive ratio than \texttt{CTE}, while the best known lower-bound on the quantity, for deterministic exploration algorithms, is in $\Omega(\frac{\log k}{\log \log k})$ by \cite{dynia2007robots}. 

Given the difficulty of characterizing the unconditional worst-case competitive ratio,  a line of work has focused on obtaining upper and lower bounds on the competitive ratio conditional on values of $n, D$ \cite{OrtolfS14,brass2011multirobot, dynia2006smart, disser2017general, dereniowski2015fast, higashikawa2014online}. Their goal is thus to minimize the total runtime as a function of $(n,D)$. In this spirit, \cite{brass2011multirobot} proposed a novel analysis of \texttt{CTE} yielding a guarantee in $\frac{2n}{k}+\Ocal((k+D)^k)$, hence an optimal dependence in $(n,k)$ with a large additive cost which does not depend on $n$. On the other hand, \cite{OrtolfS14} derived a recursive algorithm called \texttt{Yo}* that runs in time $\Ocal(2^{\Ocal(\sqrt{\log D \log \log k})}\log(k)(\log(k)+\log(n))(n/k+D))$, with optimal runtime up to sub-linear multiplicative factors. 

The algorithm we propose with its guarantee of $\frac{2n}{k}+ \Ocal(D^2\log(k))$  complements this line of work. Our guarantee yields a strict improvement over \cite{brass2011multirobot} for all values of $(n,D,k)$, and improves upon \texttt{CTE} and $\texttt{Yo}^*$ in specific ranges of parameters as depicted in Figure~\ref{fig:CTEvsBMRDFS}. 
\begin{figure}[h!]
  \centerline{\includegraphics[width=0.65\textwidth,trim={1cm 1cm 1cm 1cm}]{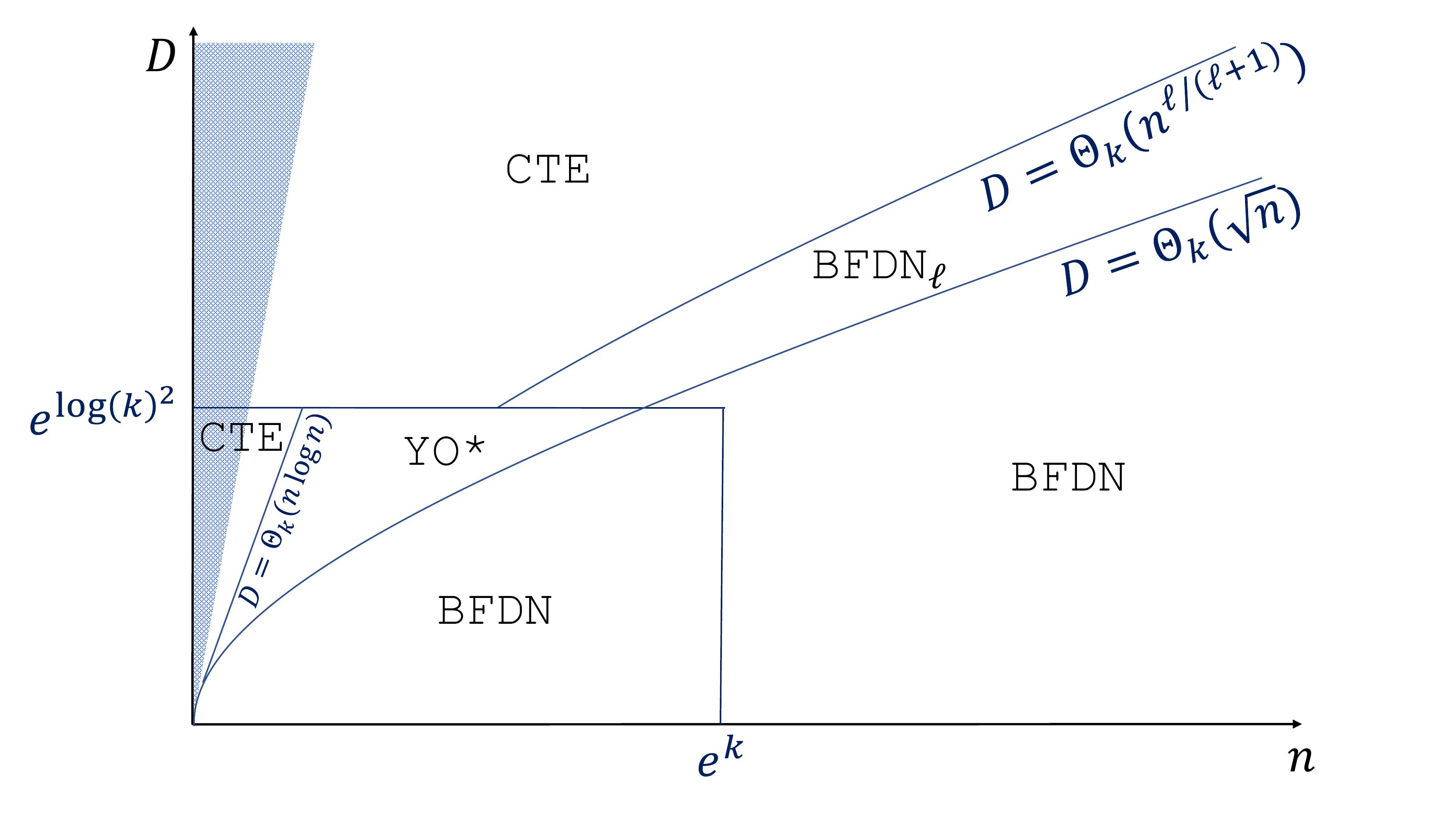}}
  \caption{Regions of parameters $(n,D)$ where either of \texttt{CTE}, $\texttt{Yo}^*$, \texttt{BFDN} and \texttt{BFDN}$_\ell$ has the best runtime guarantee, for fixed constant $k$ and $\ell =\Ocal(\log k/\log \log k)$. The runtime of algorithm Yo* was simplified to improve readability; see Appendix~\ref{sec: conclusion} for details. No trees can be defined in the dashed region where $n\leq D$.}
  \label{fig:CTEvsBMRDFS}
\end{figure}

In line with \cite{brass2011multirobot}, our work provides further motivation to study the \textit{additive overhead} of collaborative \textit{online} exploration rather than the \textit{multiplicative overhead} (competitive ratio). In a recent work,~\cite{disser2017general} showed that collaborative exploration requires at least time $\Omega(D^2)$ for a specific adversarial construction of trees and for $k=n$. This implies that for arbitrary $\epsilon>0$,  no deterministic collaborative exploration algorithm can achieve runtime of $\frac{2n}{k}+\Ocal(D^{2-\epsilon})$. In this sense, the dependency in $D^2$ of our guarantee $\frac{2n}{k}+\Ocal(D^2\log(k))$ is close-to-optimal. 

Collaborative tree exploration has also been studied under additional assumptions. For sparse trees, such as trees that can be embedded in the $2$-dimensional grid, \cite{dynia2006smart} obtained a runtime of $\Ocal(\sqrt{D}(\frac{n}{k}+D))$. More generally, their bound leads to a competitive ratio of $\Ocal(D^{1-1/p}\cdot\min\{p, \log p D^{1/2p}\})$ where $p$ is the {\em sparsity parameter} of the underlying tree. Also, \cite{dereniowski2015fast} investigated the collaborative tree exploration problem under the assumption that the number of robots $k$ is very large instead of being a fixed constant, specifically $k\geq Dn^c$ for some constant $c>1$. In this setting, and assuming global communication, their algorithm achieves exploration in $\frac{c}{c-1}D+o(D)$. Interestingly, their guarantees also apply to the more challenging and less studied collaborative graph exploration problem; see also \cite{brass2011multirobot,brass2014improved}.

Finally we note that collaborative exploration is loosely connected to the notion of $k$-cover time of random walks. The cover time of a random walk on a graph is defined as the time until all nodes have been visited at least once, see \cite{aldous1995reversible} chapter 6 for an analysis of the cover time by multiple random walkers, extending a work of \cite{broder1989trading}. In the same spirit, \cite{alon2008many} showed that the $k$-cover time of independent random walks scales in $1/k$ for several families of graphs, including expanders and Erd\H{o}s-Rényi random graphs. However these  results of \cite{alon2008many} do not apply to trees. Moreover the cover time of a tree by a random walker can be as large as $\Omega(n^2)$. 

\paragraph{Structure of paper.} 
Section \ref{sec:BFDN} describes the \texttt{BFDN} algorithm and provides the main results. Section \ref{sec: game} describes and analyzes a 2 player zero-sum board game, an essential ingredient in our proof of runtime bounds for \texttt{BFDN}. Section \ref{sec: other-settings} contains extensions of  \texttt{BFDN} to settings with: limited communications; adversarial interruption of robots; and non-tree graph exploration.  Finally, Section \ref{sec:trading} provides a recursive version  of  \texttt{BFDN} that yields improved runtime guarantees when $D$ gets larger with compared to $n$.

\paragraph{Notations.} The notation $\log(\cdot)$ refers to the natural logarithm and $\log_2(\cdot)$ to the logarithm in base $2$. For two integers $k$ and $D$ we will use the abbreviations $[k] = \{1,\dots,k\}$ and $[D[ =\{0,\dots,D-1\}$.

A tree $T=(V,E)$ is defined by its set of nodes $V$ and edges $E \subset V\times V$; it is rooted at some specific node denoted $\texttt{root}\in V$ from which all robots start the exploration. For a node $v\in V$, $\depth(v)$ is the distance of $v$ to the \texttt{root} and $T(v)$ denotes the sub-tree of $T$ rooted at $v$ containing all the descendants of $v$.  
The depth of $T$ is $D=\max_{v\in V}\depth(v)$.
We will also use a notion of \textit{partially explored tree} (defined in Section \ref{sec:BFDN}) that will naturally inherit from the same definitions.

\section{The Breadth-First Depth-Next algorithm}\label{sec:BFDN}
Our main result on  \texttt{BFDN}, that will be described shortly, is the following
\begin{theorem}
\label{th:BFDN_main}
{\normalfont \texttt{BFDN}} achieves online exploration of any tree with $k$ robots in at most
$$\frac{2n}{k} + D^2(\min\{\log(\Delta),\log(k)\}+2)$$
rounds, where $\Delta$ is the maximum degree of the tree, $n$ is the number of nodes, and $D$ is the depth. \end{theorem}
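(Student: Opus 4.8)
The plan is to bound the total runtime of \texttt{BFDN} by separately accounting for two types of robot activity: "productive" moves that explore new edges, and "overhead" moves caused by robots traveling in breadth-first mode toward unexplored frontiers or idling while returning to the root. Since every edge of the tree is traversed exactly twice in any depth-first traversal, the productive depth-next moves contribute a total of $2(n-1)$ traversals across all robots, which when divided among $k$ robots gives the dominant $2n/k$ term. The remaining challenge is to show that the overhead—the time spent in breadth-first mode and any waiting—is bounded by $D^2(\min\{\log(\Delta),\log(k)\}+2)$.

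**The key reduction** is to the two-player urn game analyzed in Section~\ref{sec: game}, which the excerpt tells us induces a cost of at most $k\min\{\log(k),\log(\Delta)\}+k$ to the player. First I would set up a potential or charging argument organized by \emph{depth levels}. The intuition from the algorithm description is that a robot in breadth-first mode is dispatched to the node adjacent to the \emph{highest} unexplored edge; I would track, at each round, how robots are distributed across the depth levels $\{0,1,\dots,D-1\}$ of the current frontier. The critical claim to establish is that the number of rounds in which progress is "slow" (i.e., few robots are actively exploring new edges) is controlled by the game's value. Specifically, I would argue that at each of the $D$ depth levels, the cost of reassigning robots and the associated breadth-first travel is bounded by one instance of the urn game, so that summing over the $D$ levels and multiplying by the per-level travel distance $D$ yields the $D^2$ factor, with the $\min\{\log(\Delta),\log(k)\}$ factor inherited from the game's value.

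**The main obstacle** I anticipate is making the correspondence between \texttt{BFDN}'s robot-dispatch dynamics and the abstract urn game precise enough to transfer the bound rigorously. In particular, I must identify what plays the role of the "player" versus the "adversary"—the algorithm's dispatch rule should be the player's strategy, while the tree's branching structure (revealed online) acts as the adversary placing balls in urns. The subtle point is that the factor $D$ enters \emph{twice}: once because there are $D$ depth levels at which rebalancing can occur, and once because each breadth-first redispatch costs up to $D$ travel steps to reach the frontier. I would need to verify that these two sources of the $D$ factor do not interact in a way that inflates the bound beyond $D^2$, and that the total number of "game rounds" charged is indeed at most $D$ per level rather than something larger that depends on $n$.

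**Finally**, I would assemble the pieces: letting $\runtime$ denote the total runtime, I would write $\runtime \le \frac{2(n-1)}{k} + (\text{overhead})$ and bound the overhead by $D$ (levels) times the per-level game cost $\frac{1}{k}(k\min\{\log(k),\log(\Delta)\}+k)$ times the travel factor $D$, collecting constants to reach $D^2(\min\{\log(\Delta),\log(k)\}+2)$. The clean separation of the $2n/k$ term (exact, from double edge traversal) from the depth-dependent overhead (from the game analysis) is what makes the additive—rather than multiplicative—guarantee possible, and I expect the delicate bookkeeping to lie entirely in the overhead term.
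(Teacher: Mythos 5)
Your proposal is correct and follows essentially the same route as the paper's proof: the paper likewise attributes $2(n-1)$ total moves to dangling-edge exploration (giving the $2n/k$ term), bounds idle time by $D+1$ rounds, charges each excursion anchored at depth $d$ an overhead of $2d$ breadth-first travel, and invokes the urn-game lemma to bound the number of excursions per depth level by $k(\min\{\log k,\log\Delta\}+2)$, so that summing over the $D$ levels yields the $D^2$ term. The two sources of the $D$ factor combine exactly as you anticipate, with no harmful interaction, since the total overhead is $\sum_{d} 2d \cdot k(\min\{\log k,\log\Delta\}+2) \le D(D-1)\,k(\min\{\log k,\log\Delta\}+2)$.
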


Before describing the algorithm, we precise the setting and some definitions. Following \cite{fraigniaud2006collective}, we consider a synchronous model with all-to-all communications. A team of $k$ robots operates in rounds, and the runtime of an exploration algorithm is measured by the number of rounds executed before termination. Communications and memory constraints are not considered in this section, but are introduced along with additional formalism in Section \ref{sec:commmunication}.

\paragraph{Partially explored tree.} 
At a given exploration round, $V$ denotes the set of \textit{discovered nodes}, i.e.   
nodes that have been occupied by at least one robot in the past, and  $E$ denotes the set of \textit{discovered edges}, i.e. edges that have a discovered endpoint. An \textit{unexplored edge} or \textit{dangling edge} is a discovered edge that has never been traversed by any robot. It can be viewed as a pair $(u,?)$, with $u\in V$. 
The \textit{partially explored tree} $T_{\text{online}}=(V,E)$ thus encodes all the information gathered by the robots. If there are no more dangling edges in $T_{\text{online}}$, it implies that exploration is complete and that the partially explored tree equals the underlying tree  $T_{\text{offline}}\in \Tcal(n,D)$.

\paragraph{Collaborative exploration algorithm.} A collaborative exploration algorithm under the full communication model is formally defined as a function that maps a partially explored tree $T=(V,E)$ as well as the list of positions of the agents $p_1,\dots,p_k \in V^k$ to a list of \textit{selected edges} $e_1,\dots,e_k \in (E\cup\{\perp\})^k$ that the agents will use for their next move. All selected edges $e_i\in E$ must be adjacent to the position $p_i$. Dangling edges may be selected. By convention, $\perp$ indicates that the corresponding agent will not move at the next round. In our pseudo-code, the routine $\texttt{SELECT}(\texttt{Robot}_i,e)$ performs the assignment $e_i\gets e$.  

When all agents have selected a next move, the routine \texttt{MOVE} is applied and all agents move along their selected edge synchronously. The partially explored tree $(V,E)$ is then updated with the new information provided by the agents that have traversed a dangling edge. 

For any rooted tree, exploration starts with all agents located at the root and the collaborative exploration algorithm is applied iteratively. The algorithm terminates when the explored tree $(V,E)$ contains no dangling edges and when the position of all agents is back at the root. The runtime of an exploration algorithm is defined as a function of $(n,D)$ by the number of rounds required before termination on any tree with $n$ nodes and depth $D$.

\paragraph{Breadth-First Depth-Next Algorithm.} 
\texttt{BFDN} is formally defined in Algorithm~\ref{alg:BFDN_algo}. Its description in words is as follows. When located at the root, a robot indexed by $i\in [k]$ and denoted $\texttt{Robot}_i$ is assigned an {\em anchor} $v_i\in V$ which is a discovered node that is adjacent to at least one dangling edge. If no such node exists, the anchor is the root itself.
The exact assignment process is specified by procedure \texttt{Reanchor} which gives the priority to nodes that are the closest to the root and that have the least number of anchored robots. $\texttt{Robot}_i$ then attains this anchor in a series of breadth-first moves performed with procedure \texttt{BF}. When the anchor is reached, the robot only makes depth-next moves until it returns to the root with procedure \texttt{DN}. In a sequence of depth-next moves, the robot always goes through a dangling edge if one is available (i.e. adjacent and not already selected as next move by another robot), and 
 goes one step up towards the root otherwise. This will result in a depth-first-like exploration inside $T(v_i)$. The algorithm stops when all robots are at the root and cannot be reassigned a new anchor because there are no more dangling edges. 
 
\begin{algorithm}
\caption{\texttt{BFDN} ``Breadth-First Depth-Next'' }\label{alg:BFDN_algo}
\begin{algorithmic}[1]
\Require $k$ robots that start at the root of an unknown  tree.
\Ensure The robots visit all nodes and return to the root.
\State $V  = $ list of discovered nodes ; $E = $ list of discovered edges ; $\depth(v) = $ depth of node $v\in V$
\State $v_i \gets \texttt{root} ~~ \forall i\in \{1,\dots,k\}$ \Comment{Initialize anchors.}
\State $S_i \gets [~] ~~ \forall i\in \{1,\dots,k\}$\Comment{Initialize empty stacks.}
\Do
\For{$i=1$ to $k$} \Comment{$\texttt{Robot}_i$ will select a move.}\label{line:for}
\If{$\texttt{Robot}_i$ is at \texttt{root}}
\State $v_i\gets \texttt{Reanchor}(i)$
\State Stack in $S_i$ the list of edges that lead to $v_i$
\EndIf
\If{$S_i$ is not empty}
\State $\texttt{BF}(i)$
\Else
\State $\texttt{DN}(i)$
\EndIf
\EndFor
\State \texttt{MOVE} all robots in their selected edge in parallel and update $(V,E)$ 
\doWhile{some robot changes position}\\

\Procedure{BF}{$i$}
\State Unstack $e\in E$ from $S_i$ and $\texttt{SELECT}(\texttt{Robot}_i,e)$
\EndProcedure\\

\Procedure{DN}{$i$}
\If{$\texttt{Robot}_i$ is adjacent to some dangling and unselected edge $e\in E$}
\State $\texttt{SELECT}(\texttt{Robot}_i,e)$
\Else
\State $\texttt{SELECT}(\texttt{Robot}_i,\texttt{up})$ \Comment{If $\texttt{Robot}_i$ is at the root, $\texttt{up}$ is interpreted as $\perp$.}
\EndIf 
\EndProcedure\\

\Procedure{Reanchor}{$i$}
\State $A = \{v \in V ~~ s.t. ~~ v \text{ is adjacent to some unexplored edge and } \delta(v) \text{ is minimal}\}$\label{line:anchorset}
\If{$A\neq \emptyset$}
\State $v_i\gets \arg\min_{v\in A} \#\{j ~~ s.t. ~~ v_j=v\}$ \Comment{Assigns to anchor with minimum number of robots.}
\Else
\State $v_i\gets \texttt{root}$ \Comment{The tree is explored, $\texttt{Robot}_i$ stays at the root.}
\EndIf
\EndProcedure
\end{algorithmic}
\end{algorithm}

\subsection{Analysis of \texttt{BFDN} and proof of Theorem~\ref{th:BFDN_main}}
We first prove the correctness and termination of \texttt{BFDN} and then bound its runtime. 

\paragraph{Correctness.} 

In Algorithm \ref{alg:BFDN_algo}, the main do-while loop is interrupted when no robot changes position at some round. Note that the \texttt{root} is the only place where robots may stay at the same position because direction $\texttt{up}$ is interpreted as $\perp$ at the root only. Thus all robots are at the root when the algorithm stops. Also note that the selection of direction \texttt{up} by all robots at the root implies that there are no dangling edges in the tree. Thus the tree has been entirely explored and all robots have returned to the origin. Thus, the algorithm is correct. 

\paragraph{Termination.} To prove termination, we show that while the algorithm runs, a node is discovered every $3D$ rounds at least. Since there are $n$ nodes in the tree, the algorithm must stop after at most $3D\times n$ rounds. Assume by contradiction that no node is discovered in a sequence of $3D$ rounds. After $2D$ rounds, all robots have attained the root because 
all \texttt{DF} moves are directed \texttt{up}. Thus, either one robot is assigned an anchor that is adjacent to an unexplored edge which will be traversed in the coming $D$ rounds, or the algorithm stops. In both cases we have a contradiction.

The time complexity analysis of \texttt{BFDN} the relies crucially on the following Lemma, that is
proved in Section~\ref{sec: game} using a reduction to a balls in urns game. 

\begin{lemma}\label{lem:main_lemma}
    In an execution of {\normalfont \texttt{BFDN}}, for any $d\in \{1,\dots,D-1\}$, the total number of reassignments over all $i\in[k]$ of  some ${\normalfont \texttt{Robot}}_i$'s anchor $v_i$ to a new value $v\in V$ of depth $\depth(v)=d$ is at most $k(\min\{\log(k),\log(\Delta)\}+2)$. 
\end{lemma}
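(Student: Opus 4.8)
The plan is to fix a depth $d$ and track the reanchorings performed during the single contiguous time interval in which the shallowest dangling edge sits at depth $d$, then to recognize these reanchorings as the moves of the balls-in-urns game analyzed in Section~\ref{sec: game}. First I would establish this \emph{epoch} structure. A dangling edge incident to a node $u$ can only be created when a robot traverses a dangling edge out of $u$'s parent, i.e. when a dangling edge of depth $\depth(u)-1$ is consumed, while dangling edges are otherwise only ever removed (by being traversed). Hence the shallowest depth carrying a dangling edge is non-decreasing along the execution, and there is a well-defined contiguous interval $I_d$ of rounds during which \texttt{Reanchor} returns nodes of depth exactly $d$; in those rounds the set $A$ on line~\ref{line:anchorset} is precisely the set of depth-$d$ nodes still incident to a dangling edge, and every reassignment counted in the statement occurs within $I_d$. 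Moreover, at the start of $I_d$ all nodes of depth $\le d$ are already discovered, no depth-$d$ node can acquire a new dangling edge, and each carries at most $\Delta$ dangling edges; consequently the family $A_d(t)$ of \emph{active} depth-$d$ nodes is monotone non-increasing over $I_d$, a node leaving $A_d$ once its last incident dangling edge is traversed and never re-entering.

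Next I would set up the reduction, taking the robots as balls and the active depth-$d$ nodes as urns, a ball sitting in urn $v$ precisely while its robot is anchored at $v$. Two facts drive the correspondence: by inspection of \texttt{Reanchor}, a robot reassigned a depth-$d$ anchor is placed at the active node carrying the fewest currently anchored robots, i.e. at the least-loaded urn; and a robot anchored at $v$ descends into $T(v)$ in depth-next mode and returns to the root to be reassigned only after $v$ has become inactive, since on its way up during $I_d$ it meets no dangling edge at depth $\le d$ above $v$. Thus removing an urn from $A_d$ is exactly what eventually releases its balls to the player, who re-places them by the least-loaded rule, and the total number of reassignments to depth $d$ equals the number of placements in this game plus the at most $k$ initial placements made when the robots first reach depth $d$. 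Invoking the game analysis of Section~\ref{sec: game}, the least-loaded strategy incurs at most $k\,\minkdel + k$ placements, so the count is at most $k\,\minkdel + 2k = k(\minkdel+2)$, as claimed. The two terms of the minimum reflect two independent caps on how often a single ball moves: it can be merged into ever larger groups at most $\log k$ times before all $k$ balls coincide, and, since a node of degree at most $\Delta$ can productively host at most $\Delta$ descending robots, the load relevant to an urn is capped by $\Delta$, yielding the $\log\Delta$ bound.

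The main obstacle I anticipate is making this reduction faithful despite timing slack: a robot does not leave an urn the instant it becomes inactive but only after finishing its current descent, and a robot reassigned to an active node may find, after up to $D$ breadth-first moves, that its target's dangling edges have meanwhile been consumed by others. I would handle this by charging each reassignment to the assignment event itself rather than to any productive traversal, and by checking that such stale arrivals are a special case of a ball being placed into an urn that the adversary subsequently empties, which is exactly the kind of event the least-loaded accounting of Section~\ref{sec: game} is designed to absorb. The remaining work is bookkeeping: verifying the off-by-one between `active' and `occupied' urns and separating the initial $\le k$ placements from the later moves, so as to obtain the stated additive constant $2k$.
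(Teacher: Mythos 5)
Your reduction is the same one the paper uses: robots are balls, candidate depth-$d$ anchors are urns, a robot returning to the root is an adversary move releasing one ball from its (by then closed) former anchor, and \texttt{Reanchor}'s least-loaded rule is player B's balancing strategy; your epoch argument via monotonicity of the minimum depth of a dangling edge is a correct substitute for the paper's use of Claim~5. The genuine gap is in how you pay for the $\log\Delta$ branch of the minimum. Theorem~\ref{th:game} bounds the number of moves made \emph{until the game's stopping condition is met}, namely until every never-selected urn holds at least $\Delta$ balls. When $\Delta<k$, that moment is \emph{not} the end of depth-$d$ re-anchorings in \texttt{BFDN}: at that point the still-open depth-$d$ nodes each have at least $\Delta$ robots anchored to them, but they are still open, and robots returning from closed anchors continue to be re-anchored to them by \texttt{Reanchor}. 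These post-game re-anchorings are counted in the lemma's statement but by nothing in your budget: your two additive $+k$ terms are already spent on (i) the first depth-$d$ anchoring of each robot and (ii) the additive $+k$ inside Theorem~\ref{th:game}. The paper spends one of its $+k$'s exactly here: once every eligible anchor has at least $\Delta$ robots assigned, all dangling edges at depth $d$ are traversed within the next $d$ rounds (the anchored robots arrive and select distinct dangling edges), and within a window of $d$ rounds each of the $k$ robots can be re-anchored at most once, giving at most $k$ further reassignments. Without this piece your argument proves only the $\log k$ branch (where game termination, $U_t=\emptyset$, really does coincide with no depth-$d$ anchor remaining eligible), or, after patching, a weaker bound of the form $k(\log\Delta+3)$.

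A second, smaller discrepancy you should be aware of (the paper is also somewhat informal here): \texttt{Reanchor} balances over the set of \emph{currently open} depth-$d$ nodes, whereas player B in Theorem~\ref{th:game} balances over the set of \emph{never-selected} urns; the former can be strictly smaller, since a node may lose its last dangling edge before any robot has returned from it. Moreover, in your accounting the balls enter the game at staggered times (each robot's first depth-$d$ anchoring), while Theorem~\ref{th:game} assumes all $k$ balls are present from the start. The paper resolves both points with the device of ``formally re-anchoring'' the single robot already inside each unfinished depth-$d$ subtree to its root at the start of the epoch --- Claim~5 guarantees there is exactly one such robot per unfinished subtree --- so that the entire run, including first anchorings, maps onto the game with a modified initial condition (one urn with $k-u$ balls and $u$ urns with one ball each), for which the game analysis is then adapted; see also Remark~\ref{rem:rem1}. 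Your ``stale arrivals'' paragraph gestures at these issues but does not reduce them to a configuration that the analysis of Theorem~\ref{th:game} actually covers.
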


\paragraph{Time complexity.} During the execution, a given $\texttt{Robot}_i$ anchored at $v_i$ can spend time in two different ways (1) not moving (2) moving along a selected edge. We denote $\T_i^1, \T_i^2$ the time (number of rounds) spent by $\texttt{Robot}_i$ in each of these phases. We have that $\sum_{i\in [k]}(\T_i^1+ \T_i^2) = k\T$ where $\T$ is the total number of rounds of the algorithm as the $k$ robots operate in parallel. We now  prove a series of claims.

\begin{claim}The total number of rounds when some robot does not move is at most $D+1$.
\end{claim}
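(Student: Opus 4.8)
The plan is to reduce the statement to a clean local condition and then split the execution into two phases according to the status of the root. First I would establish that a robot fails to move in a round if and only if it is located at \texttt{root} and selects $\perp$. Indeed, a robot with a non-empty stack runs \texttt{BF} and always traverses an edge, and a robot running \texttt{DN} away from the root always moves, since it either crosses an incident dangling edge or takes the step \texttt{up}; only at the root does \texttt{up} degenerate to $\perp$. Since the only edges incident to the root are the root's own edges, a robot at the root is stuck precisely when none of the root's incident dangling edges is available to it — either because the root has no dangling edge left, or because in that round every such edge has already been selected by another robot.

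Next I would partition the rounds according to whether the root still has an incident dangling edge. For this first phase I claim there is \emph{at most one} stuck round. The point is that while the root carries a dangling edge, the minimal anchor depth is $0$, so \texttt{Reanchor} sends every robot currently at the root back to the root itself; each such robot is then in \texttt{DN} mode and grabs a distinct incident dangling edge until they are exhausted. A stuck robot therefore witnesses that the number of robots at the root strictly exceeds the number of remaining root-incident dangling edges, which forces \emph{all} of those edges to be traversed in that very round. As traversed edges never become dangling again, the root carries no incident dangling edge in any later round, bounding this phase by a single stuck round.

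For the second phase — after all of the root's edges have been traversed — a robot sitting at the root is stuck exactly when the whole tree has been explored, so I would focus on the final return phase. There each robot follows a fixed residual trajectory (it completes any pending \texttt{BF} descent toward its last anchor and then ascends via \texttt{DN}), and I would track the potential $R_i$ equal to the number of steps robot $i$ still needs in order to reach and remain at the root; every move decreases each $R_i$ by exactly one. The stuck rounds of this phase form the contiguous block from the first round in which some robot has returned to the root until termination, whose length is exactly $\max_i R_i - \min_i R_i$. I would then argue that this spread is at most $D$, yielding at most $D$ stuck rounds here and hence $D+1$ in total.

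The main obstacle is precisely this spread bound. A robot may be halfway through a breadth-first descent toward an anchor whose subtree has meanwhile been fully explored by others (an ``overshoot''), so that its residual journey, $2\delta-c$ in terms of its anchor depth $\delta$ and current depth $c$, can exceed $D$; thus $\max_i R_i$ alone is not controlled by $D$. The key observation I would exploit is that anchors are always chosen at the current \emph{minimal} dangling-edge depth, which keeps the robots' depths synchronized: whenever some robot still has a long residual journey, it was recently dispatched to a deep frontier, so the minimal dangling-edge depth is already large, every shallower region is fully explored, and no robot can simultaneously be on the verge of returning to the root from a shallow depth. Formalizing this synchronization — that a large $\max_i R_i$ forces a correspondingly large $\min_i R_i$, so that the spread stays at most $D$ — is the crux; the two structural reductions above I expect to be routine.
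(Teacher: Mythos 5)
Your two preliminary reductions are correct and coincide with the paper's own case analysis: a robot can only fail to move while sitting at the root; the phase in which the root still has an incident dangling edge costs at most one stuck round; and afterwards a stuck robot certifies that the whole tree is explored, so the remaining stuck rounds form exactly the terminal block between the first post-exploration return and termination. The genuine gap is the step you yourself flag as the crux: bounding that terminal block by $D$. Note that the paper dispatches this step in a single clause (``at most $D$ times because all robots are on their way back''), which silently assumes that once exploration is complete every robot is ascending; you are right that this is unjustified, since a robot in breadth-first descent is committed to finishing its descent to its anchor, so its residual journey is $2\delta(v_i)-c$, which can approach $2D$. However, the repair you propose --- that anchor depths ``synchronize'' the robots, so that a large $\max_i R_i$ forces a large $\min_i R_i$ and the spread stays below $D$ --- is not merely unproven: it is false, so your plan cannot be completed as stated.

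Concretely, take $k=3$ and even $D\ge 6$, and let the root have three branches: a path $p_1,\dots,p_{D-1}$ where $p_{D-1}$ carries two leaves $x$ and $y$; a pendant path of length $D/2$; and a pendant path of length $D/2+1$. Each robot descends one branch from round $0$ (tie-breaking is immaterial by symmetry). The robot on the $D/2$-path returns to the root at the end of round $D-1$; at round $D$ the unique dangling edge is the one from $p_{D-1}$ to $y$, so \texttt{Reanchor} anchors this robot at depth $D-1$ and it commits to a $2(D-1)$-round trip during which it explores nothing. The first robot traverses the edge to $y$ at round $D+1$, completing the tree; the robot on the $(D/2+1)$-path then reaches the root at the end of round $D+1$ and sits idle from round $D+2$ until the anchored robot returns at the end of round $3D-3$. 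This gives $2D-3>D+1$ stuck rounds, so the hoped-for spread bound, the claim as stated, and the paper's one-line argument all fail on this instance. What your residual-potential framework does prove immediately is the bound $2D$: every anchor has depth at most $D-1$, so after full exploration each robot's residual journey is at most $2(D-1)$ if descending and at most $D$ if ascending, whence at most $2D$ stuck rounds in total. That weaker version still suffices for Theorem~\ref{th:BFDN_main}, since $D(D-1)(\min\{\log k,\log\Delta\}+2)+2D\le D^2(\min\{\log k,\log\Delta\}+2)$, but it is a different statement than the one you were asked to prove.
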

\begin{proof}[Proof of claim 1]
    First note that if a robot does not move, it must be located at the root and have selected direction $\texttt{up}$ with procedure \texttt{DN}. This can only happen if the robot is also anchored at the root. Consequently, when a robot stays at the root, it means either that there are no more dangling edges in the explored tree (this happens at most $D$ times because all robots are on their way back) or that there are still dangling edges that are adjacent to the root, but they are all selected (this happens at most once because at the next time-step, all edges adjacent to the root will be explored). The number of time-steps when a robot may not move is thus at most $D+1$.
\end{proof}

\begin{claim}
When a dangling edge is explored for the first time, it is traversed by a single robot.
\end{claim}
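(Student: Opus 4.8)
The plan is to trace which kinds of moves can traverse a dangling edge and then exploit the sequential structure of the inner \texttt{for} loop together with the ``unselected'' guard in procedure \texttt{DN}. First I would observe that a robot's move is generated either by procedure \texttt{BF} or by procedure \texttt{DN}, and argue that a \texttt{BF} move never traverses a dangling edge. Indeed, the edges stacked in $S_i$ are exactly those on the path from the \texttt{root} to the anchor $v_i$; and $v_i$ is always a discovered node, since \texttt{Reanchor} returns either an element of $A\subseteq V$ (see line~\ref{line:anchorset}) or the \texttt{root} itself. Hence some robot has previously occupied $v_i$ and has therefore already traversed every edge on the root-to-$v_i$ path, so those edges are explored by the time they are unstacked. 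Consequently the only moves capable of traversing a dangling edge are \texttt{DN} moves.

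Next I would show that within any single round at most one robot selects a given dangling edge. The inner loop at line~\ref{line:for} processes the robots sequentially for $i=1,\dots,k$, and each call $\texttt{SELECT}(\texttt{Robot}_i,e)$ records $e$ as selected before the loop advances. When \texttt{DN} is invoked for $\texttt{Robot}_i$, it only picks a dangling edge that is \emph{unselected}, i.e. not already chosen this round by some $\texttt{Robot}_j$ with $j<i$. Therefore, once a dangling edge $e$ has been selected by one robot in a round, every later robot adjacent to $e$ in the same round sees $e$ as selected and either chooses a different dangling edge or moves \texttt{up}. Hence no two robots select the same dangling edge in the same round, so at most one robot traverses it when \texttt{MOVE} is applied.

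Finally I would combine these observations over the whole execution: a dangling edge $e$ that is traversed in some round ceases to be dangling immediately afterward, when $(V,E)$ is updated, so $e$ can be the target of a \emph{first} traversal in that one round only. Since that round's traversal is performed by a single robot, the first traversal of any dangling edge is carried out by exactly one robot, which is the claim.

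I expect the only delicate point to be the first step --- ruling out \texttt{BF} moves --- which hinges on the invariant that the root-to-anchor path is fully explored. This follows from anchors being discovered nodes, but it is the place where one must be careful not to conflate ``discovered'' with ``reachable only through explored edges,'' and to note that edges, once explored, stay explored for the remainder of the execution.
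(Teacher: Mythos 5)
Your proof is correct and follows essentially the same route as the paper's: rule out \texttt{BF} moves as traversals of dangling edges (since the stacked root-to-anchor path consists of already-explored edges), then use the ``unselected'' guard in \texttt{DN} together with the sequential processing of the \texttt{for} loop to ensure no two robots select the same dangling edge in a round. The paper's own proof is just a terser version of this argument, so your additional care (the cut-edge reasoning for why the anchor path is explored, and the observation that an edge ceases to be dangling after its first traversal) is sound elaboration rather than a different approach.
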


\begin{proof}[Proof of claim 2:]
    All breadth-first moves (with procedure \texttt{BF}) are through previously explored edges because they lead from the root to a previously explored node. Thus dangling edges are only explored in depth-next moves (with procedure \texttt{DN}). In this procedure, a dangling edge is selected by a robot as its next move only if it is not selected as next move by some other robot. 
\end{proof} 

\begin{claim}
    Consider a sequence of moves by some $\texttt{Robot}_i$ that starts at the root with the assignment of an anchor $v$ of depth $\depth(v)=d \geq 0$ and that ends with the return of that robot to the root. We denote this sequence of moves $x$ and we denote by $\T_x$ its duration (in number of rounds). In such a sequence, $\texttt{Robot}_i$ has explored exactly $( \T_x-2d)/2$ dangling edges.
\end{claim}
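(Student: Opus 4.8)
The plan is to prove this purely combinatorial accounting identity by splitting the sequence $x$ into its two phases --- the breadth-first ascent to the anchor and the subsequent depth-next exploration --- and by tracking the robot's depth as a potential function. The first thing to observe is that the two phases are well defined and that the anchor stays fixed throughout $x$: procedure \texttt{Reanchor} is invoked only when $\texttt{Robot}_i$ is at the \texttt{root}, and by hypothesis the robot leaves the root at the start of $x$ and does not return until the end of $x$, so $v_i = v$ for the entire sequence.

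First I would handle the breadth-first phase. By construction the stack $S_i$ is loaded with exactly the edges on the root-to-$v$ path, whose length is $\depth(v) = d$, and procedure \texttt{BF} unstacks them one at a time. Hence the robot performs exactly $d$ moves to reach $v$, each traversing an already-discovered edge and increasing its depth by one; the BF phase therefore lasts $d$ rounds and leaves the robot at depth $d$.

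Next I would analyze the depth-next phase, which begins at $v$ (depth $d$) and terminates at the \texttt{root} (depth $0$). Two structural facts drive the count. By Claim 1, away from the root a robot never idles, so every round of this phase is an actual move; and by the definition of \texttt{DN}, each such move is either the traversal of a dangling edge --- which increases depth by one and explores exactly one new edge --- or a step \texttt{up}, which decreases depth by one. Let $m$ be the number of exploring (down) moves and $u$ the number of up moves. The important point is that in \texttt{DN} a downward move always goes through a dangling edge (no previously explored edge is ever retraversed downward), so $m$ simultaneously counts the down-moves and counts the dangling edges explored during $x$.

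Finally I would close with a conservation argument: the depth changes by $+1$ on each down move and $-1$ on each up move, and since the DN phase starts at depth $d$ and ends at depth $0$ we get $m - u = -d$, i.e. $u = m + d$. Adding the $d$ rounds of the BF phase yields $\T_x = d + (m + u) = 2d + 2m$, whence $m = (\T_x - 2d)/2$, as claimed. I do not expect a genuine obstacle here; the only step requiring care --- and the one I would state explicitly --- is the dual role of $m$, namely that every \texttt{DN} down move corresponds to a fresh dangling edge and that the robot never stays put between leaving and returning to the root. Both follow immediately from the \texttt{DN} selection rule and Claim 1, so the argument reduces to the displacement bookkeeping above.
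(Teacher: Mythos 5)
Your proof is correct and follows essentially the same route as the paper's: both decompose $x$ into the $d$-step breadth-first ascent and the depth-next phase, observe that every downward \texttt{DN} move traverses a fresh dangling edge, and balance up-moves against down-moves (the paper splits the balance into ``moves inside $T(v)$'' plus $d$ return steps, while you equivalently use the net displacement $m-u=-d$ over the whole phase). The extra details you make explicit --- the anchor staying fixed during $x$ and the robot never idling away from the root --- are correct and implicit in the paper's argument.
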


\begin{proof}[Proof of claim 3]
The sequence of moves $x$ has the following structure. First, $\texttt{Robot}_i$ uses a shortest path from the \texttt{root} to $v$ which takes $d$ moves through previously explored edges. Then the robot performs moves inside $T(v)$ by going down through dangling edges if some are available and going up towards the root otherwise. Note that exactly half of the moves inside $T(v)$ must be through dangling edges as there must be as many moves down and up in $T(v)$. Finally, the robot goes back from $v$ to the root in again $d$ moves through explored edges. In the end, the robot has explored exactly $(\T_x-2d)/2$ dangling edges in this sequence.
\end{proof}

We now assemble the claims and Lemma \ref{lem:main_lemma} together to bound the time complexity of \texttt{BFDN}. Using claim 1, we have that $\sum_i \T_i^1 \leq k(D+1)$. Then, we write $\sum_i \T_i^2 = \sum_{d\in [D[}\sum_{x\in X_d}\T_x$ where $X_d$ represent the list of all sequences of moves $x$ that start with the assignment of some robot an anchor $v$ at depth $\depth(v)=d$ and that end with the return of that robot to the root. Using claim 2 and claim 3, we have that $\sum_{d\in [D[}\sum_{x\in X_d}(\T_x-2d)/2 \leq n-1$. Consequently, 
\begin{equation*}
\sum_{i\in [k]} \T_i^2 \leq 2(n-1) + 2\sum_{d\in [D[}\sum_{x\in X_d}d.
\end{equation*} 
By Lemma \ref{lem:main_lemma}, the cardinality of $X_d$ is at most $k(\min\{\log(k),\log(\Delta)\}+2)$, for $d\in \{1,\dots,D-1\}$. Thus, $\sum_{d\in [D[}\sum_{x\in X_d}d \leq \frac{D(D-1)}{2}k(\min\{\log(k),\log(\Delta)\}+2)$. Finally, using $\sum_{i\in [k]}(\T_i^1+ \T_i^2) = k\T$, we obtain $k\T\leq 2(n-1) + D(D-1)k(\min\{\log(\Delta),\log(k)\}+2)+(D+1)k$, which proves that the algorithm stops after at most
\begin{equation*}
    \T\leq \frac{2n}{k}+ D^2(\min\{\log(\Delta),\log(k)\}+2)
\end{equation*}
steps, thus completing Theorem~\ref{th:BFDN_main}'s proof. \\

Though it is not required for the the analysis above, we conclude this Section with a final claim that helps the general understanding of the algorithm.

\begin{claim}At all rounds, all dangling edges are in $\cup_{i\in [k]} T(v_i)$.
\end{claim}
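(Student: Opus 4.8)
The plan is to prove the statement by induction on the rounds, maintaining as invariant the assertion itself, namely that every dangling edge lies in $\cup_{i\in[k]} T(v_i)$. A \texttt{MOVE} only removes dangling edges (those just explored) or creates new ones, while a reassignment of an anchor changes some $v_i$ but leaves the set of dangling edges unchanged. Hence it suffices to check two things: (a) each newly created dangling edge is covered by the anchor of the robot that created it, and (b) no reassignment of an anchor uncovers a dangling edge that was previously covered. The base case is immediate, since at initialization every anchor equals the \texttt{root} and $T(\texttt{root})$ is the whole tree.

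First I would record three structural facts. Call a discovered node \emph{active} if it is incident to some dangling edge. (F1, monotonicity) the set of dangling edges incident to a fixed node can only shrink, so once a node is no longer active it never becomes active again. (F2) at the round an anchor $v_i$ is assigned, \texttt{Reanchor} makes it an active node of minimal depth, so every proper ancestor of $v_i$ is inactive at that round and, by (F1), forever after; consequently the breadth-first moves and the final ascent of a robot to the \texttt{root} only traverse edges between inactive nodes, and every dangling-edge traversal a robot performs happens at a node of $T(v_i)$. (F3) a new dangling edge is produced only by a \texttt{DN} traversal of a dangling edge, which by (F2) occurs at a node $w\in T(v_i)$; the freshly discovered child and its incident edges therefore also lie in $T(v_i)$, settling point (a) with the very same anchor $v_i$.

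The main obstacle is point (b): understanding what happens when a robot located at the \texttt{root} is reassigned from an old anchor $v_i^{\mathrm{old}}$ to a new one. The favorable sub-case is when the \texttt{root} is itself active: then depth $0$ is the unique minimum in \texttt{Reanchor}, every reassigned robot is anchored at the \texttt{root}, and $T(\texttt{root})$ trivially covers everything. When the \texttt{root} is inactive, I would first show that the abandoned anchor $v_i^{\mathrm{old}}$ is necessarily exhausted: on the last round the robot left $v_i^{\mathrm{old}}$ upward, \texttt{DN} selected \texttt{up}, so every dangling edge still incident to $v_i^{\mathrm{old}}$ was at that moment selected by another robot and thus explored during that round; by (F1), $v_i^{\mathrm{old}}$ is inactive thereafter. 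Hence the departing robot is not the unique cover of any dangling edge incident to $v_i^{\mathrm{old}}$ itself.

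The delicate point that remains, and which I expect to be the crux, is that dangling edges may persist deep inside $T(v_i^{\mathrm{old}})$ at nodes the departing robot never exhausted alone, so I must show these stay covered after the reassignment. The strategy is to combine (F1) with the minimal-depth priority of \texttt{Reanchor}: any surviving dangling edge sits at an active node that is either still contained in $T(v_j)$ for a robot $j$ currently operating below the \texttt{root} (whose anchor, by (F2), is an ancestor of that node and does not change while $j$ is away from the \texttt{root}), or has become a minimal-depth active node, in which case the robots freed at the \texttt{root} during this round are reassigned precisely to such nodes. Making this last step fully rigorous requires a simultaneous bookkeeping of all robots reassigned in the same round, together with the fewest-robots tie-break of \texttt{Reanchor}, so as to rule out any single round in which an active node is left orphaned; this bookkeeping is where the real work lies, while the remaining cases follow directly from (F1)--(F3).
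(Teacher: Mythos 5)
Your induction framework is fine through point (a), but the proof is incomplete exactly where you say the ``real work lies,'' and the strategy you sketch for step (b) does not go through as stated. In your dichotomy, the second branch asserts that when a surviving dangling edge sits at a minimal-depth active node, ``the robots freed at the \texttt{root} during this round are reassigned precisely to such nodes.'' That is false in general: \texttt{Reanchor} assigns each robot arriving at the root to \emph{one} minimal-depth active node, and in a given round there may be far fewer robots at the root than there are minimal-depth active nodes, so reassignment alone cannot guarantee that every such node acquires an anchor. No bookkeeping of the fewest-robots tie-break can fix this, because the invariant must in fact be witnessed by robots that are \emph{not} at the root.

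The missing idea --- which the paper uses, and which makes the induction unnecessary --- is to track the robot that discovered the endpoint $v$ of the dangling edge $e$. That robot discovered $v$ during a depth-next move, so its anchor $v_i$ was at that moment an ancestor of $v$. Crucially, a robot in \texttt{DN} mode moves up from $v$ only when every dangling edge adjacent to $v$ is selected by some other robot, hence explored in that very round; since $e$ is still dangling, that robot has never moved up through $v$ and is therefore still inside $T(v)$. In particular it has never returned to the root since discovering $v$, so its anchor was never reassigned and is still the ancestor $v_i$, giving $e \in T(v) \subseteq T(v_i)$. Note that this is precisely your own argument that $v_i^{\mathrm{old}}$ is exhausted when abandoned upward, applied not to the anchor but to the node $v$ itself; once stated this way, the first branch of your dichotomy always holds for $v \neq \texttt{root}$ (the root case is separate and easy, since while the root is active every reanchoring returns the root), reassignment can never orphan a dangling edge, and the simultaneous-bookkeeping step you flag as the crux disappears entirely.
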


\begin{proof}[Proof of claim 4]
Consider some dangling edge $e$ and its discovered endpoint $v \in V$. At the round when $v$ was discovered by a robot, that robot must have been performing a depth-next move because the depth of its anchor was less than or equal to the depth of $v$ which was adjacent to a dangling edge. Thus, the robot cannot 
have left $T(v)$ before the edge $e$ was visited. Consequently, that robot is still rooted at some ancestor $v_i$ of $v$, thus $e\in \cup_{i\in [k]}T(v_i)$.
\end{proof}

\section{A 2-player zero-sum game with balls in urns}\label{sec: game}
In this Section we introduce a 2-player zero-sum board game that will allow to prove the main Lemma used in the complexity analysis of \texttt{BFDN}. 

\paragraph{Game description.} At time $t\in \Nbb$, the board of the game is a list of $k$ integers $(n_1^t, \dots, n_k^t)$ that represent the load of $k$ urns 
with a total of $k$ balls. When the game starts at $t=0$, we have $n_i^0 = 1$ and at every instant we have $\sum_{i\in[k]}n_i^t=k$  and $n_i^t\geq 0$. At time $t$, player A (the adversary) chooses an urn $a_t \in [k]$ that is not empty, i.e. such that  $n_{a_t}^t\geq 1$, and then player B (the player) chooses an urn $b_t\in [k]$ and moves a ball from urn $a_t$ to urn $b_t$. At the beginning of time $t+1$, the board has thus changed by $n_{a_t}^{t+1} = n_{a_t}^{t}-1$ and $n_{b_t}^{t+1} = n_{b_t}^{t}+1$.

\paragraph{Goal of the game.} At a given time $t$, we denote by $U_t \subset \{1,\dots,k\}$ the set of urns that have never been selected by the adversary. At the start, $U_0 = \{1,\dots,k\}$ and after round $t$, $U_{t+1} = U_t\setminus\{a_t\}$. The game stops when all urns in $U_t$ contain at least $\Delta$ balls, i.e. $n_i^t\geq \Delta, \forall i \in U_t$. If $\Delta\geq k$,  the game thus stops when all urns have been chosen, i.e. $U_t=\emptyset$. The goal of player B is to end the game as soon as possible, the goal of the adversary is to play for as long as it can.

\paragraph{Strategy of the player.} We consider the following strategy used by player B against the adversary. At time $t$, player B chooses an urn $b_t$ that contains the least number of balls among the urns that have never been chosen by the adversary, i.e. $ b_t \in \arg\min_{i\in [k]\setminus\{a_1,\dots, a_t\}}n_i^t$.

We now state the main result of this section, of which our analysis of \texttt{BFDN} is a direct consequence.
\begin{theorem}\label{th:game}
If the player uses the strategy above, the game ends in at most $k\min\{\log(\Delta),\log(k)\}+k$ steps.    
\end{theorem}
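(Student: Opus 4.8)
The plan is to bound the total number of rounds $T$ by classifying the adversary's moves and then exploiting the fact that the player's ``fill the least-loaded surviving urn'' rule keeps the surviving urns tightly balanced. Call a move at time $t$ \emph{fresh} if $a_t \in U_t$, so that $U_{t+1} = U_t \setminus \{a_t\}$ strictly shrinks, and \emph{stale} if $a_t \notin U_t$, so that $U_{t+1} = U_t$. Since $U$ can shrink at most $k$ times there are at most $k$ fresh moves, and $T = (\#\text{fresh}) + (\#\text{stale})$. The first step is to reduce the count of stale moves to the sizes of removed urns. Track $B_t := \sum_{i \in U_t} n_i^t$, the number of balls currently sitting in surviving urns. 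Because the player always deposits the moved ball into an urn of $U_{t+1}$, a stale move increases $B$ by exactly $1$, whereas a fresh move removing an urn of load $c_t := n_{a_t}^t$ changes $B$ by $-(c_t-1)$. Telescoping from $B_0 = k$ to $B_T \ge 0$ and using $B_T \le k$ gives $\#\text{stale} \le \sum_{\text{fresh}}(c_t - 1)$, and hence the clean bound $T \le \sum_{\text{fresh}} c_t$.

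The heart of the argument is a balancing invariant. Writing $\ell_t = \min_{i \in U_t} n_i^t$, I claim that at every time $t$ all surviving urns have load in $\{\ell_t, \ell_t + 1\}$, and that $\ell_t$ is non-decreasing in $t$. Monotonicity is immediate, since balls are only ever added to surviving urns and urns only ever leave $U$ (so the minimum, taken over a shrinking set of non-decreasing loads, cannot drop). The spread bound follows by induction on the move type: the player adds a ball only to a least-loaded urn of $U_{t+1}$, so the gap between the maximum and minimum loads among survivors never grows beyond $1$, and deleting the selected urn cannot increase that gap. In particular every removed urn satisfies $c_t \le \ell_t + 1$.

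It then remains to bound $\sum_{\text{fresh}} \ell_t$, and two ceilings on $\ell_t$ combine to do so. First, a pigeonhole bound: the $|U_t|$ surviving urns each hold at least $\ell_t$ balls out of $k$ total, so $\ell_t \le k/|U_t|$. Second, while the game is still running the stopping condition fails, so some surviving urn holds fewer than $\Delta$ balls, i.e. $\ell_t \le \Delta - 1$. Ordering the fresh moves $r = 1, \dots, R \le k$, the urn count just before the $r$-th fresh move is exactly $|U_t| = k - r + 1$, since stale moves leave $|U|$ unchanged. Substituting, $T \le \sum_{\text{fresh}} c_t \le \sum_{r=1}^{k} \min\{\Delta,\ \lfloor k/(k-r+1)\rfloor + 1\}$, and after reindexing the sum $\sum_{s=1}^{k}\min\{\Delta,\ k/s + 1\}$ evaluates to $k\min\{\log \Delta, \log k\} + O(k)$: the $O(k/\Delta)$ terms where $k/s \ge \Delta$ are capped at $\Delta$ and contribute $O(k)$, while the remaining tail is the harmonic sum $k\sum_{s} 1/s$ truncated near $s = k/\Delta$, yielding the factor $k\min\{\log\Delta,\log k\}$. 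This is precisely where the natural logarithm, rather than $\log_2$, enters, through the harmonic series.

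I expect the main obstacle to be establishing the balancing invariant rigorously, in particular the round where the \emph{unique} least-loaded survivor is the one removed (so the new minimum jumps up by one) and the check that the player's deposit into $U_{t+1}$ rather than $U_t$ never breaks the spread-$\le 1$ property. The accompanying telescoping identity for $B_t$ needs a small correction at the terminal fresh move, where $U_{t+1}=\emptyset$ and no ball is deposited; this only strengthens the bound but must be accounted for. The remaining delicate point is the precise bookkeeping of the additive term, i.e. showing the crude $O(k)$ above can be sharpened to land on the stated $+k$.
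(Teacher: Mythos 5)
Your argument is sound in its core mechanism and takes a genuinely different route from the paper's. Both proofs rest on the same balancing observation (your spread-$\le 1$ invariant is exactly the paper's statement that surviving urns hold $\lfloor N_t/u_t\rfloor$ or $\lceil N_t/u_t\rceil$ balls), but they diverge after that. The paper characterizes the \emph{optimal} adversary: it defines the value function $R(N,u)$ of the residual game and proves by an induction/exchange argument (Lemma~\ref{lem:2ndlemme}) that the adversary's best response is to replay an old urn whenever possible ($N<k$) and otherwise to remove a maximally loaded surviving urn; the bound then comes from evaluating the length of that single optimal trajectory, $\sum_h\lceil k/h\rceil$. You instead bound \emph{every} adversary uniformly via the potential $B_t=\sum_{i\in U_t}n_i^t$: the telescoping $\#\text{stale}\le\sum_{\text{fresh}}(c_t-1)$ charges all stale moves to the loads of removed urns, giving $T\le\sum_{\text{fresh}}c_t$ with no game-theoretic analysis at all, and then pigeonhole plus the stopping condition caps each $c_t$ by $\min\{\Delta,\,\lfloor k/|U_t|\rfloor+1\}$. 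Your route is more elementary (no value-function recursion, no exchange argument) and more robust: the reduction in Lemma~\ref{lem:main_lemma} and Remark~\ref{rem:rem1} needs this game bound under a modified initial condition (one urn with $k-u$ balls, $u$ urns with one ball), which your argument handles verbatim by changing $B_0$, whereas the paper must assert that a ``direct adaptation'' of its analysis applies. What the paper's route buys is the exact worst-case trajectory, which is what (nominally) delivers the tight additive constant.

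That constant is the one real gap, which you flag but do not close: your final sum $\sum_{s=1}^{k}\min\{\Delta,\,k/s+1\}$ evaluates to $k\min\{\log\Delta,\log k\}+2k+o(k)$, not $+k$ --- the ``$+1$''s contribute about $k$ and the capped terms contribute about another $k$. The ingredient that recovers most of this is precisely the terminal refinement you mention in passing: when the game ends because $U_T=\emptyset$ (the only possible ending when $\Delta>k$), the telescoping should use $B_T=0$ rather than $B_T\le k$, and the terminal fresh move deposits no ball back into $U$; together these turn your bound into $T\le\sum_{\text{fresh}}c_t-(k-1)$, i.e.\ the last urn's ``budget'' is never spent, which is exactly the feature of the paper's optimal trajectory that saves a term of order $k$. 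In the regime $\Delta\le k$ one must similarly restrict the sum to the $R=k-|U_T|$ actual fresh moves rather than padding it to $k$ terms. Two mitigating remarks: first, the paper's own closing computation is comparably loose (its integral comparison $\sum_{h=\lceil k/\Delta\rceil}^{k}1/h\le\int_{k/\Delta}^{k}dx/x$ drops a boundary term of the same order), so you are not being held to a stricter standard than the paper meets; second, the downstream use of this theorem, Lemma~\ref{lem:main_lemma}, only needs the bound $k(\min\{\log k,\log\Delta\}+2)$, which your $+2k$ version already supplies, so the paper's results are unaffected by the difference.
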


\begin{proof}
The set $U_t$ does not increase with time. We denote its cardinality $u_t = |U_t|$. The strategy of player B implies that the difference between the number of balls in two distinct urns of $U_t$ is at most $1$. Consequently, denoting $N_t = \sum_{i\in U_t}n_i^t$ the total number of balls in urns of $U_t$, the number of balls in each urn of $U_t$ lies in $\{\ceil{\frac{N_t}{u_t}},\floor{\frac{N_t}{u_t}}\}$. The game thus stops as soon as $\frac{N_t}{u_t}\geq \Delta$ and the quantity $x_t := \Delta u_t - N_t$, must be positive as long as the game lasts.  
We distinguish two options for the adversary at any step $t$:
\begin{enumerate}
    \item The adversary chooses an urn $a_t$ that it previously chose ($a_t\not\in U_t$). In this case, $u_{t+1} = u_t$ and $N_{t+1}=N_t +1$. 
    Note that this option is available to the adversary only if some ball lies outside of $U_t$, i.e. if $N_t \leq k-1$.
    \item The adversary chooses an urn $a_t$ that it has never chosen before ($a_t\in U_t)$. In this case, $u_{t+1} = u_t-1$ and $N_{t+1}= N_t-n_{a_t}^{t}+1$. 
\end{enumerate}

\paragraph{Best response of the adversary.} For parameters $u,N \in\{0,\ldots,k\}$, we denote by $R(N,u)$ the largest number of steps that the game will still last after player B's move led to a configuration where $N_t=N$ and $u_t=t$ at any time $t$. Note that by the discussion above, this value is the same for all such configurations of the game. 
Clearly,
$$
\Delta u-N\le 0 \Rightarrow R(N,u)=0.
$$
Besides, in view of the options just listed, one has the following, assuming $\Delta u-N>0$:
\begin{equation}\label{eq:game_value}
\begin{array}{ll}
N<k &\Rightarrow R(N,u)=1 + \max\left(R(N-\lceil N/u\rceil+1,u-1),R(N-\lfloor N/u\rfloor+1,u-1), R(N+1,u)  \right),\\
N=k&\Rightarrow R(N,u)=1+\max\left(R(N-\lceil N/u\rceil+1,u-1),R(N-\lfloor N/u\rfloor+1,u-1)\right).
\end{array}
\end{equation}
We now establish the following
\begin{lemma}\label{lem:2ndlemme}
For any $(u,N)\in \{0,\ldots,k\}$, it holds that:

i) Function $M\to R(M,u)$ is non-increasing, and 

ii) The maximum in \eqref{eq:game_value} for $N<k$ is always achieved by $R(N+1,u)$.
\end{lemma}
\begin{proof}
For $u=0$, $R(M,u)\equiv 0$ and there is nothing to prove. Assume that the two properties i) and ii) hold for $v=u-1\ge 0$. We will show that ii) holds for $u$. Consider $N<k$. By the monotonicity assumption i),
$$
R(N-\lceil N/u\rceil+1,u-1)\ge R(N-\lfloor N/u\rfloor+1,u-1).
$$
Assume thus that the adversary moves first to configuration $(N-\lceil N/u\rceil+1,u-1)$.
By assumption ii) at rank $v$, its next best move is to configuration $(N-\lceil N/u\rceil+2,u-1)$. If alternatively the adversary had made a first move to $(N+1,u)$, it could then move to $(N+1-\lceil (N+1)/u\rceil +1, u-1)$. Now by the monotonicity assumption ii) this improves the adversary's reward if 
$N -\lceil N/u\rceil+2\ge N+1-\lceil (N+1)/u\rceil +1$, which is obviously true. We have thus established ii) at rank $u$. Monotonicity i) at rank $u$ readily follows, since we now have that $R(N+1,u)=R(N,u)-1$ if $\Delta u-N>0$. 
\end{proof}

By Lemma~\ref{lem:2ndlemme}, the adversary always chooses option 1. when it is available and chooses option 2. otherwise. Playing option 2. grants 
a budget to choose option 1. for another $\ceil{\frac{N_t}{u_t}}-1$ time steps.
This entirely determines the course of a game when the adversary responds optimally to the player's strategy. Note that in such game, $u_t$ is decremented by $1$ every $\ceil{\frac{k}{u_t}}$ steps. The game stops after $u_t \leq \frac{k}{\Delta}$, thus the last series of moves comes when $u_t = \ceil{\frac{k}{\Delta}}$. Assuming $\Delta\leq k$, the game then lasts a total time of $\ceil{\frac{k}{k}}+\ceil{\frac{k}{k-1}}+\dots \ceil{\frac{k}{\ceil{k/\Delta}}} \leq \sum_{h = \ceil{k/\Delta}}^{k}\left(\frac{k}{h}+1\right)\leq k\int_{k/\Delta}^k \frac{dx}{x} +k\leq k(\log(k)-\log(k/\Delta))+k = k\log(\Delta) + k$. Instead if $k<\Delta$, the game will stop after $u_t=1$ and the sum is thus bounded by $k\int_{1}^k \frac{dx}{x} +k\leq k\log(k)+k$.

The game therefore ends in at most $k\min\{\log(\Delta),\log(k)\}+k$ steps.
\end{proof}

\subsection{Connection to \texttt{BFDN}}
We now use the analysis of the two-player game to prove Lemma \ref{lem:main_lemma}.

\setcounter{theorem}{1}
\begin{lemma*}[Restated]
    In an execution of {\normalfont \texttt{BFDN}}, for any $d\in \{1,\dots,D-1\}$, the total number of reassignments of an anchor $v_i$ to a new value $v\in V$ at depth $\depth(v)=d$ is at most $k(\min\{\log(k),\log(\Delta)\}+2)$. 
\end{lemma*}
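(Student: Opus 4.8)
The plan is to reduce the counting of depth-$d$ reassignments, taken over the whole execution, to a single play of the balls-in-urns game of Section~\ref{sec: game}, and then to invoke Theorem~\ref{th:game}.

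First I would establish a clean \emph{phase structure}. Let the \emph{frontier depth} at a given round be the minimal depth of a node still adjacent to a dangling edge. I claim this quantity is non-decreasing: new dangling edges are created only when a robot performs a depth-next move from a node at (or below) the current frontier to a freshly discovered child, hence at strictly larger depth, while nodes at the frontier can only lose dangling edges. Consequently the set of rounds at which the frontier equals $d$ forms a single contiguous interval, which I call phase $d$, and every reassignment of an anchor to a depth-$d$ node happens during phase $d$ (since \texttt{Reanchor} always picks a minimal-depth node adjacent to a dangling edge). Two features of phase $d$ will drive the reduction: (i) no node of depth $d$ is discovered during phase $d$, since discovering one requires traversing a depth-$(d-1)$ edge while all shallower nodes are already exhausted, so the pool of potential depth-$d$ anchors is fixed at the phase's start; and (ii) by Claim~4 every dangling edge lies in some $T(v_i)$, so a depth-$d$ node carrying a dangling edge must itself be a current anchor, whence at most $k$ depth-$d$ nodes are simultaneously active.

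Next I would map the execution to the game. I identify the $k$ robots with the $k$ balls and the active depth-$d$ nodes with the urns; the load of an urn $v$ is the number of robots currently anchored at $v$, and a node is declared ``chosen by the adversary'' exactly when it becomes exhausted (all of its at most $\Delta$ child-edges traversed). With this dictionary, a reanchoring that moves a robot \emph{from one depth-$d$ node to another} corresponds to one game move: the environment decides which excursion terminates first and sends the robot back to the root, playing the role of the adversary removing a ball from the old urn, and \texttt{Reanchor} then places the robot at a least-loaded active depth-$d$ node, which is precisely player B's least-loaded strategy restricted to the not-yet-exhausted urns. The degree bound $\Delta$ is what matches the game's termination threshold, since an urn holding $\Delta$ balls corresponds to a node whose child-edges can all be consumed in one round. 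For the counting, the \emph{first} time each robot is anchored at depth $d$ is not a move between two depth-$d$ urns; there are at most $k$ such first placements, and they play the role of the game's initial one-ball-per-urn configuration. Every later depth-$d$ reanchoring is a genuine game move, so by Theorem~\ref{th:game} their number is at most $k\min\{\log(\Delta),\log(k)\}+k$, and adding the at most $k$ first placements yields the claimed $k(\min\{\log(k),\log(\Delta)\}+2)$.

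I expect the main obstacle to be making this correspondence rigorous rather than heuristic. The delicate points are: reconciling the fact that robots enter phase $d$ \emph{gradually} (so the live load is at most $k$ and grows over time) with the game's assumption that all $k$ balls are present from the outset, which should only shorten the induced play and hence preserve the upper bound; confirming via point (ii) that the number of ever-active depth-$d$ urns is controlled so that the $k$-urn game genuinely dominates the dynamics; and verifying the threshold alignment, namely that no depth-$d$ reassignment can occur once the game's termination condition is met. I would handle these by tracking the same monovariant $x_t = \Delta u_t - N_t$ used in the proof of Theorem~\ref{th:game}, arguing that each depth-$d$ reassignment decreases a corresponding potential of the partially explored tree, so that the adversarial game value is a valid upper bound on the reassignment count.
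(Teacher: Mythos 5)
Your reduction dictionary (robots as balls, the at most $k$ candidate depth-$d$ anchors as urns, excursion terminations as adversary moves, \texttt{Reanchor}'s least-loaded rule as player B, the degree bound $\Delta$ as the stopping threshold) is exactly the one the paper uses, and your phase-structure observations (monotone frontier, fixed pool of depth-$d$ candidates during phase $d$) are sound. The gap is in the last of your ``delicate points'': the claim that \emph{no depth-$d$ reassignment can occur once the game's termination condition is met} is false, and the monovariant $x_t=\Delta u_t - N_t$ cannot rescue it. When every non-exhausted depth-$d$ candidate has at least $\Delta$ robots \emph{anchored} to it, those robots are still en route (breadth-first moves take up to $d$ rounds to reach depth $d$), so the dangling edges at those nodes persist for up to $d$ further rounds; during that window the frontier is still $d$, and any robot finishing an excursion is re-anchored at depth $d$ by \texttt{Reanchor}. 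These reassignments occur strictly after the game has ended, so Theorem~\ref{th:game} says nothing about them. The paper budgets for them explicitly: after the game terminates, ``in the subsequent $d$ rounds \texttt{BFDN} can anchor each robot at most one last time,'' contributing up to $k$ extra reassignments.

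This breaks your accounting. You spend $k$ on first placements (treated as the initial configuration, outside the game) and $k(\min\{\log k,\log\Delta\}+1)$ on game moves, which already exhausts the additive budget $2k$; adding the post-termination reassignments would give $k(\min\{\log k,\log\Delta\}+3)$, not the claimed bound. The paper avoids this by \emph{absorbing the first placements into the game} rather than counting them separately: it formally re-anchors the $u=|U|$ robots already inside the candidate subtrees (one per candidate, by Claim~5) and starts the game from a modified configuration of $u$ urns with one ball each plus one urn holding the remaining $k-u$ balls, so that every first-time depth-$d$ anchoring of an outside robot is itself a genuine game move drawn from that big urn; the adapted game value is still $k(\min\{\log k,\log\Delta\}+1)$, which leaves the final $+k$ free for the post-termination window. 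A secondary slip, fixable by the same device: a depth-$d$ node carrying a dangling edge need not ``itself be a current anchor''---its covering anchor can be a strict ancestor assigned in an earlier phase; what is true, and what the paper's Claim~5 states, is that its subtree hosts exactly one robot, which the proof then formally re-anchors to it without altering the algorithm's behaviour.
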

\setcounter{theorem}{4}
\begin{proof}

We start the proof of the Lemma by the following claim on \texttt{BFDN}.

\begin{claim}
At some round, if all anchors are at depth at most $d-1$, all nodes $v$ discovered at depth $d$ are in either of these (non-exclusive) situations: their sub-tree $T(v)$ is entirely discovered, or their sub-tree $T(v)$ hosts exactly one robot.
\end{claim}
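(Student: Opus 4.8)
The plan is to establish the two halves of the (non-exclusive) dichotomy separately: first that \emph{at most one} robot lies in $T(v)$, and second that \emph{at least one} robot lies in $T(v)$ whenever $T(v)$ is not fully explored; together these give exactly one robot in the unexplored case and at most one otherwise. Throughout I fix the round $t_0$ at which every anchor satisfies $\depth(v_i)\le d-1$, fix a discovered node $v$ with $\depth(v)=d\ge 1$, and write $e_v$ for the unique edge joining $v$ to its parent. The structural fact I will lean on is that an anchor is set only when a robot sits at the \texttt{root} (through \texttt{Reanchor}) and stays constant until that robot next returns to the \texttt{root}; hence during the \emph{current excursion} of any robot --- from its last \texttt{root} departure up to $t_0$ --- its anchor has been at depth $\le d-1$ the whole time.

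For \textbf{uniqueness}, I would argue that any robot $R$ lying in $T(v)$ at $t_0$ began its current excursion at the \texttt{root} (depth $0$) and is now at depth $\ge d$, so it crossed $e_v$ downwards during this excursion. A \texttt{BF} move cannot realise such a crossing: breadth-first moves follow the \texttt{root}-to-anchor path, all of whose nodes have depth $\le \depth(v_i)\le d-1$ and so never reach depth $d$. The crossing is therefore a \texttt{DN} move, which requires $e_v$ to be dangling at that instant; since an edge is dangling only up to its first traversal, $R$'s crossing \emph{is} the first traversal of $e_v$. By Claim~2 that first traversal is performed by a single robot, so every robot present in $T(v)$ at $t_0$ is that one robot.

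For \textbf{existence}, suppose $T(v)$ still carries a dangling edge $e'$, with discovered endpoint $w\in T(v)$, and reuse the mechanism behind Claim~4. The robot $R$ that discovered $w$ did so in \texttt{DN} mode and was thus located at $w\in T(w)$; I claim it has not left $T(w)$ by round $t_0$. Leaving $T(w)$ means being at $w$ and selecting \texttt{up}, which \texttt{DN} only does when $w$ has no adjacent dangling \emph{unselected} edge. But $e'$ is adjacent to $w$ and, by monotonicity of exploration, dangling at every round up to $t_0$; so at any candidate departure round $e'$ would have to be \emph{selected} by another robot, which would then traverse it and destroy its dangling status before $t_0$ --- a contradiction. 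Hence $R\in T(w)\subseteq T(v)$ at $t_0$, and combined with the previous paragraph it is exactly one. Note this half does not use the anchor hypothesis at all.

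The delicate point is the uniqueness argument, and specifically ruling out robots that entered $T(v)$ earlier and lingered. The global hypothesis that \emph{all} anchors sit at depth $\le d-1$ is exactly what forecloses this: it forbids a current \texttt{BF} descent into $T(v)$ (no anchor lies at depth $\ge d$) and any stale occupancy from a prior excursion (returning to re-anchor forces a robot out of $T(v)$, and re-entry would again demand either a fresh first traversal of $e_v$ or an anchor of depth $\ge d$). Tightening the existence half rests in turn on the selection-contention clause of \texttt{DN}: one must argue that a robot can vacate a node bearing a dangling edge only by ceding that edge to a peer who immediately consumes it, which is precisely what keeps the discovering robot trapped inside $T(w)$ for as long as $e'$ survives.
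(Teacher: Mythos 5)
Your proof is correct and takes essentially the same route as the paper's: existence via the observation that the robot which discovered the endpoint of a dangling edge cannot leave that subtree while the edge remains dangling (the paper's Claim~4 mechanism), and uniqueness via the anchor-depth hypothesis forcing any entry into $T(v)$ to be the first traversal of its parent edge, which by Claim~2 involves a single robot. You merely spell out details the paper leaves implicit (that \texttt{BF} moves cannot reach depth $d$, and that anchors are fixed within an excursion), which strengthens rather than changes the argument.
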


\begin{proof}[Proof of claim 5]
Consider a discovered node $v$ at depth $d$ that contains a dangling edge in its sub-tree $T(v)$, we show that 
$T(v)$ hosts one robot. The dangling edge must have a discovered endpoint $v'\in T(v)$ that was attained by a robot performing depth-next moves. This robot cannot 
have left $T(v') \subset T(v)$ because $v'$ is still adjacent to a dangling edge, thus that robot is still in $T(v)$. At most one robot is in $T(v)$ because $v$ can only have been attained by a single robot, since all anchors are at depth $d-1$ or above.
\end{proof} 

The Lemma will result from the following reduction of the analysis of \texttt{BFDN} to the urns and balls game. We fix some depth $d\geq 1$ and bound  the number $N_d$ of times a robot is assigned a breadth-first move to some vertex at depth $d$ as follows. At the start of the earliest round when this happens, all anchors are at depth at most $d-1$. We consider the set $U$ of nodes  at depth $d$ that contain a robot in their  sub-tree. Obviously $|U|\leq k$ (in fact, $|U|\leq k-1$ because at least one robot must be at the root). Using the claim above, and the fact that there are no more dangling edges at depth $d-1$, we note that $U$ contains all nodes at depth $d$ that are adjacent to dangling edges, and thus all possible candidates for anchors at depth $d$. For each such candidate anchor, we formally re-anchor the robot exploring the corresponding sub-tree to this anchor (this does not change the algorithm's evolution).

We then increment counter $c$ at every instance of a robot re-anchoring, with possibly multiple increments within a single round. 

For counter increment to value $c$, we denote $a_c\in V$ the vertex to which the robot was previously anchored, and by $b_c \in U $ the vertex to which it is anchored next. Note all nodes in $\{a_1,\dots,a_c\}$ can no longer be adjacent to a dangling edge. We stop the increment the last time a robot is anchored at depth $d$, which happens when there does not remain any node at depth $d$ that is adjacent to some dangling edge.

Consider the counter value $C$ when for all nodes in $U$, either a robot  returning from it has reached the root, or at least $\Delta$ robots have been anchored at it. Then $C$ is the value of a run of the previous two-player game, initialized with one urn containing $k-u$ balls and $u$ urns each containing one ball, where $u=|U| \in \{0,\ldots,k-1\}$ and where player $B$ implements the balancing strategy. Indeed the re-anchoring strategy of \texttt{BFDN} balances the numbers of robots assigned per anchor. A direct adaptation of our analysis also holds for this modified initial condition of the game, yielding the upper bound on $C$ of $k(\minkdel+1)$. Once $C$ assignments at depth $d$ were made, at least $\Delta$ robots are assigned to nodes at depth $d$ that are still adjacent to a dangling edge. In the subsequent $d$ rounds \texttt{BFDN} can anchor each robot at most one last time before there is no more dangling edge at depth $d$. This yields the announced bound of  $k(\min(\log(k),\log(\Delta))+2)$ on $N_d$.
\end{proof}
\begin{remark}\label{rem:rem1}
Recall that upon counter increment to value  $c$, only nodes in $U \setminus\{a_1,\dots,a_{c}\}$ may be adjacent to a dangling edge. Consider again the urns-in-balls assignment rule $b_c = \arg\min_{v\in U \setminus\{a_1,\dots,a_{c}\}}n^c_v$, where $n_v^c$ denotes the number of nodes anchored at $v$ upon increment $c$, but where nodes in $U$ remain eligible as anchors until some robot has returned to the root from them. This is precisely the modified assignment rule we will need  in the variant of {\normalfont \texttt{BFDN}} in Section \ref{sec:commmunication}. Using again the above-mentioned formal re-anchoring of robots to nodes in $U$, the proof of Theorem \ref{th:game} entails that, for such modified assignment rule, a robot will have returned from all nodes of $U$ after at most $k(\min\{\log(k),\log(\Delta)\}+2)$ increments, after which  there are no more dangling edges at depth $d$.
\end{remark}

\section{Extensions of \texttt{BFDN} to alternative settings}\label{sec: other-settings}

We now consider three settings where a \texttt{BFDN} strategy enjoys non-trivial runtime guarantees.

\subsection{Restricted memory and communications}
\label{sec:commmunication}
In this section, we assume that robots are allowed to communicate with a central planner only when they are located at the root and that they have access to $\Delta + D\log(\Delta)$ bits of internal memory. We show that in this setting, a simple variant of \texttt{BFDN} achieves fast exploration.

Formally, we describe the setting as follows. At every node, the \textit{ports}, which are defined as the endpoints of the adjacent edges, are numbered from $1$ to $\Delta$ where $\Delta$ is the maximum degree. A node $v$ at depth $d\leq D$ is identified by the sequence of ports that leads to it from the root with $d\log_2(\Delta)$ bits. For every node distinct from the root, we assume that port number $1$ leads to the root. As before, robots operate in rounds. 

All robots arriving at the root at some round $t$ have their memory read and stored by the planner along with their identifier. The planner can then perform any computation and update the memory of the robots. For all robots arriving at some node $v$ distinct from the root at some round $t$, we assume that the robots can observe the list of all ports from which a robot has returned (these will be called ``finished ports''). Then the robot has two choices: \texttt{SELECT} a port number as next move, or use a local routine \texttt{PARTITION} with the following properties:
\begin{itemize}
    \item No two robots calling \texttt{PARTITION} at some node $v$ will be sent to the same port $j \geq 2$.
    \item If a robot calling \texttt{PARTITION} at node $v$ and round $t$ is sent to port $j\geq 1$, it means that \texttt{PARTITION}  has sent a robot to all ports $j'\geq j$ at round $t$ or before.
\end{itemize}

In this model, \texttt{BFDN} is implemented as follows. In a stack of $d$ port numbers (each represented by $\log_2(\Delta)$ bits) the central planner assigns to $\texttt{Robot}_i$ an anchor $v_i$ at depth $d$ that it will reach by unstacking port numbers and applying routine \texttt{SELECT}. When the robot reaches this node, the stack is empty and the robot will make consecutive calls to routine \texttt{PARTITION} that will eventually lead it back to the root. We ask that $\texttt{Robot}_i$ stores the finished port numbers of $v_i$ using its additional $\Delta$ bits of memory. This information will be used by the central planner to update its candidates for future anchors, as specified in Algorithm~\ref{alg:BFDN_planner}.

\begin{algorithm}
\caption{\texttt{BFDN} ``Breadth-First Depth-Next'' (with central planner at the root)}\label{alg:BFDN_planner}
\begin{algorithmic}[1]
\Require At most $k$ robots arriving at the root at some round.
\Ensure Assigns a node $v$, represented by a sequence of port numbers, to each robot.
\State $d = $ working depth ; 
\State $A = $ list of anchors at depth $d$ ;
\State $R = $ nodes of $A$ from which a robot has returned ;
\State $A' = $ list of children of nodes in $A$ ;
\State $R' = $ nodes of $A'$ from which a robot has returned ;
\State \textbf{Read memory} of returning robots and update $R,R'$ accordingly.\label{line:deduction}\Comment{see proof.}
\If{$A \setminus R = \emptyset$}
\State $A \gets A'\setminus R'$ \Comment{contains at most $k$ elements.}
\State $R, A', R' \gets \emptyset$
\State $d \gets d+1$ 
\EndIf
\State \textbf{if} $A \setminus R = \emptyset$ \textbf{then} exploration is finished and robots wait at the root to be joined by the other returning robots \textbf{else} anchor robots to nodes of $A\setminus R$, ensuring that the total number of robots anchored at these nodes differs by at most one. 
\end{algorithmic}
\end{algorithm}

\begin{proposition}
    Under the present communication model with a central planner at the root, the modified {\normalfont \texttt{BFDN}} achieves exploration in at most $\frac{2n}{k} + D^2(\min\{\log(k),\log(\Delta)\}+2)$ rounds.
\end{proposition}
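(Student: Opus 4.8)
The plan is to show that the modified \texttt{BFDN} of Algorithm~\ref{alg:BFDN_planner} reproduces, round for round, the same collaborative exploration as the original \texttt{BFDN} of Algorithm~\ref{alg:BFDN_algo}, so that the runtime analysis of Theorem~\ref{th:BFDN_main} carries over almost verbatim. Concretely, I would establish three facts: (i) the algorithm is correct and respects the memory and communication budget; (ii) the exploration performed within each anchored subtree and the anchoring performed by the central planner coincide with the breadth-first/depth-next dynamics of Section~\ref{sec:BFDN}; and (iii) the number of re-anchorings at each depth obeys the same bound as in Lemma~\ref{lem:main_lemma}, now through the delayed-information variant recorded in Remark~\ref{rem:rem1}.

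For correctness and the memory budget, I would first note that the anchor $v_i$ assigned to $\texttt{Robot}_i$ is a node at the current working depth $d\le D$, encoded as a stack of at most $D$ port numbers, i.e. at most $D\log_2(\Delta)$ bits; together with the $\Delta$-bit bitmask of finished ports of $v_i$ that the robot maintains, this fits in the announced $\Delta + D\log(\Delta)$ bits. Reaching $v_i$ by unstacking and calling \texttt{SELECT} realizes the breadth-first moves, after which repeated calls to \texttt{PARTITION} realize the depth-next moves: the two stated properties of \texttt{PARTITION} guarantee that co-located robots are dispatched to distinct not-yet-finished ports in descending order, a robot being sent upward along port~$1$ only once every higher port has been handled, which is precisely the rule executed by procedure \texttt{DN}. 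Using the locally observed finished-port information, no robot re-enters a fully explored subtree, so every edge is eventually traversed and all robots return to the root. It remains to make explicit the planner's bookkeeping referenced on line~\ref{line:deduction}: when a robot returns from an anchor $v_i\in A$ carrying its finished ports, the planner adds $v_i$ to $R$ iff all ports of $v_i$ except port~$1$ are finished, and for each finished port $j\ge 2$ it adds the corresponding child of $v_i$ (a node of $A'$ at depth $d+1$) to $R'$, since a finished port certifies that the whole child subtree has been explored. Thus $A\setminus R$ and $A'\setminus R'$ are correctly maintained, the working depth is incremented only once every depth-$d$ subtree is certified finished, and robots are anchored to $A\setminus R$ in the balanced fashion of procedure \texttt{Reanchor}.

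For the runtime, I would reuse the decomposition $\sum_{i\in[k]}(\T_i^1+\T_i^2)=k\T$ of Theorem~\ref{th:BFDN_main}: the non-moving time is still at most $k(D+1)$, each explored dangling edge is still traversed once on first discovery so that the dangling traversals total at most $2(n-1)$, and each anchoring at depth $d$ still contributes at most $2d$ extra rounds through explored edges. The only quantity to re-derive is the number of anchorings at each depth $d$. Here the planner keeps a depth-$d$ node eligible as an anchor until a robot has actually returned from it to the root, which is exactly the delayed assignment rule of Remark~\ref{rem:rem1}; that remark shows that after at most $k(\minkdel+2)$ anchorings at depth $d$ every depth-$d$ subtree is finished. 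Substituting this bound for Lemma~\ref{lem:main_lemma} into the identical summation over $d\in[D[$ yields $k\T\le 2(n-1)+D(D-1)k(\minkdel+2)+(D+1)k$, hence the claimed $\frac{2n}{k}+D^2(\minkdel+2)$.

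The main obstacle is item (iii): because the planner learns which nodes are finished only when robots report at the root, its anchoring decisions rest on stale information, and a node may keep receiving robots after an earlier robot has in fact already finished its subtree. Establishing that this staleness does not inflate the number of anchorings is precisely what the balls-in-urns analysis of Theorem~\ref{th:game}, together with its delayed-rule adaptation in Remark~\ref{rem:rem1}, provides; once that bound is in hand the remainder is a transcription of the proof of Theorem~\ref{th:BFDN_main}.
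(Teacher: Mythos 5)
Your overall route is the paper's own: transcribe the proof of Theorem~\ref{th:BFDN_main} (claims on non-moving time, single traversal of dangling edges, cost $2d$ per anchoring), with Lemma~\ref{lem:main_lemma} replaced by the delayed-eligibility variant of Remark~\ref{rem:rem1}; your memory accounting and the \texttt{PARTITION}-realizes-\texttt{DN} correspondence are also fine. However, the bookkeeping rule you ``make explicit'' for line~\ref{line:deduction} is wrong, and it is not a cosmetic slip. You write that the planner adds $v_i$ to $R$ \emph{iff all ports of $v_i$ except port~$1$ are finished}, i.e.\ iff $T(v_i)$ is certified completely explored. In Algorithm~\ref{alg:BFDN_planner}, $R$ is the set of anchors from which \emph{some robot has returned to the root}, unconditionally; by the second property of \texttt{PARTITION}, such a return only certifies that every port of $v_i$ has been \emph{assigned} (hence $v_i$ is no longer adjacent to a dangling edge), not that the subtrees below it are finished. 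The ``all ports finished'' test is the criterion for membership in $R'$ (applied to the children), not for membership in $R$; you have transplanted the $R'$ rule onto $R$.

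This matters for both of your remaining steps. First, your rule contradicts the eligibility rule you invoke two paragraphs later (``eligible until a robot has actually returned from it to the root''), which is exactly the rule of Remark~\ref{rem:rem1} and is what the reduction to the urns game needs: in that reduction, the adversary selecting an urn corresponds to a robot returning from that anchor, and the urn must then leave the eligible set. Under your rule it does not, so balls can keep being placed in already-selected urns and the bound $k(\minkdel+2)$ on re-anchorings at depth $d$ no longer follows from Theorem~\ref{th:game}. Second, and worse, under your rule $A\setminus R=\emptyset$ only once every depth-$d$ subtree is completely explored, so the working depth cannot advance while any deep subtree is still in progress; returning robots are then re-anchored to fully-assigned anchors and bounce back in wasted $2d$-round trips instead of being pushed one level deeper onto the unfinished children in $A'\setminus R'$. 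Parallelism collapses: for a root whose single child carries a balanced binary tree on $n$ nodes, only the two robots that won the two ports of that child do useful work, giving runtime $\Theta(n)$ instead of $\frac{2n}{k}+\mathcal{O}(D^2\log k)$. With the rule corrected ($v_i$ enters $R$ upon any return from it, while the unfinished children $A'\setminus R'$ become the next anchors and the deep explorations continue in parallel, as in Claim~5), your argument coincides with the paper's proof.
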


\begin{proof}
    The proof is similar to that of Theorem \ref{th:BFDN_main}. All claims 1-5 remain valid for this variant of the algorithm with essentially the same arguments. The lemma is slightly adapted as in Remark~\ref{rem:rem1} because an anchor remains eligible until a robot assigned to it has returned to the root (it implies that this anchor is no longer adjacent to a dangling edge). Algorithm \ref{alg:BFDN_planner} specifies how the central planner uses returning robots to update $A\setminus R$, the set of eligible anchors at the working depth $d$. When $A\setminus R=\emptyset$, a robot has returned from all anchors at depth $d$ and $d$ is incremented. The planner also keeps track of $A'\setminus R'$, which contains the children of $A$ that may be adjacent to a dangling edge, or equivalently the ports of $A$ that are not known to be finished. For this, we use the fact that robots store the list of finished ports at their anchor with $\Delta$ bits of memory. 
\end{proof}
\begin{remark}
        The present model encompasses the more classical ``local communication'' model where robots with unbounded memory may communicate when they are located at the same node at the same round (see e.g. \cite{dereniowski2015fast}). Indeed, under these assumptions, one may leave a specific robot at the root to play the role of the central planner, and also leave a robot stationned at all anchors to keep track of the finished ports. \texttt{BFDN} thus leads to a $\frac{4n}{k-1}+D^2(\min\{\log(k),\log(\Delta)\}+2)$ guarantee for this model.
    \end{remark}

\subsection{Adversarial robot break-downs}\label{sec:adversarial} So far we assumed that all moving robots traverse exactly one edge per time-step. We relax this assumption in the present Section, assuming instead that some adversary decides at each time-step and for each robot whether the robot actually moves, or instead incurs a break-down, being stalled at its current location.

Our aim is again to
to discover the tree in as few moves as possible. However we no longer require that the robots return to the root at the end of exploration, because the adversary could decide to break down some robot indefinitely.

At each round $t\in \Nbb$,  robot $i$ is allowed to make a move if some variable $M_{ti}=1$ whereas it is blocked at its current position if $M_{ti} = 0$. For this adversarial model, we assume that  $\mathbb{M}=(M_{ti})_{t\in\Nbb,i\in[k]}$ is an arbitrary sequence of binary values. We denote the average distance travelled by the robots $A(\mathbb{M})$ which equals $A(\mathbb{M}) =\frac{1}{k}\sum_{t\in \Nbb}\sum_{i\in [k]}M_{ti}$.

For this setting, we consider \texttt{BFDN} as specified in Algorithm \ref{alg:BFDN_algo}, with the minor modification that at each round $t$ the only robots taking part in the assignment process are those which are allowed to move. More precisely, we replace the \texttt{for} loop of Algorithm \ref{alg:BFDN_algo} (\textbf{for} $i\in \{1,\dots,k\}$ \textbf{do}) with an iteration over all robots that may move (\textbf{for} $i\in \{i : M_{ti}=1\}$ \textbf{do}). This modification is introduced to ensure that when multiple robots are at the same location, blocked robots do not prevent unblocked robots from traversing dangling edges. 
We then have the following 

\begin{proposition}
    Under the model with adversarial breakdowns, for any sequence of allowed moves $\mathbb{M}\in \{0,1\}^{\Nbb\times[k]}$ satisfying $A(\mathbb{M})\geq \frac{2n}{k}+D^2(\log(k)+2)$ all edges will be visited by the above modification of {\normalfont \texttt{BFDN}}.
\end{proposition}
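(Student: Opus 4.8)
The plan is to mirror the move-counting in the proof of Theorem~\ref{th:BFDN_main}, with one structural change forced by the fact that robots are no longer required to return to the root. First I would verify that the combinatorial backbone of the analysis is untouched by the adversary. Claims~2 and~5 and Lemma~\ref{lem:main_lemma} concern the anchoring and exploration logic of \texttt{BFDN} rather than the timing of individual moves, and the modification of the main loop---letting only the robots with $M_{ti}=1$ take part in the assignment at round $t$---is exactly what guarantees that a blocked robot neither double-selects a dangling edge nor is spuriously re-anchored. In particular, a robot performing depth-next moves still never leaves the subtree below a node it discovered while that node is adjacent to a dangling edge (it would select that edge, or stall if blocked), so at most one robot sits in each depth-$d$ subtree when all anchors are at depth $\le d-1$; hence Claim~5 survives, $|U|\le k$, and the balls-in-urns reduction of Lemma~\ref{lem:main_lemma} still bounds the number of re-anchorings to depth $d$ by $k(\log(k)+2)$ for each $d\in\{1,\dots,D-1\}$ (bounding $\min\{\log(k),\log(\Delta)\}$ by $\log(k)$).

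The heart of the argument is then a global move count that does not presuppose termination or returns to the root. I would classify every actual move of every robot as one of three types: a breadth-first move toward an anchor (\texttt{BF}), a depth-next move traversing a fresh dangling edge (\texttt{DN}-down), or a depth-next move back toward the root (\texttt{DN}-up). Each \texttt{DN}-down move explores a distinct edge for the first time by Claim~2, so there are at most $n-1$ of them in total. Each assignment of a robot to an anchor at depth $d$ stacks exactly the $d$ edges on the root-to-anchor path, so it produces at most $d$ \texttt{BF} moves; summing the per-depth bound gives at most $\sum_{d=1}^{D-1} d\,k(\log(k)+2)=\tfrac{D(D-1)}{2}k(\log(k)+2)$ breadth-first moves. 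The key substitute for the ``return to the root'' accounting of Claim~3 is the observation that, since every robot starts at the root and its depth is always nonnegative, each robot makes at least as many downward as upward moves; as the only downward moves are \texttt{BF} and \texttt{DN}-down moves, the total number of \texttt{DN}-up moves is at most the total number of \texttt{BF} plus \texttt{DN}-down moves. Adding the three types, the total number of actual moves made while any dangling edge remains is at most $2(n-1)+D(D-1)k(\log(k)+2)<2n+D^2k(\log(k)+2)$.

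Finally I would close the loop using the hypothesis $A(\mathbb{M})\ge \tfrac{2n}{k}+D^2(\log(k)+2)$, i.e. $kA(\mathbb{M})\ge 2n+kD^2(\log(k)+2)$. Suppose for contradiction that some edge is never traversed; then the tree stays incompletely explored, so a dangling edge is present at every round, and in that regime every allowed robot genuinely moves (a robot with $M_{ti}=1$ at the root is re-anchored and makes a \texttt{BF} move, and away from the root a depth-next selection is always a real move). Hence the number of actual moves equals $kA(\mathbb{M})\ge 2n+kD^2(\log(k)+2)$, which contradicts the upper bound of the previous paragraph. Therefore all edges are eventually visited.

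I expect the main obstacle to be the first paragraph: re-justifying carefully that the adversary cannot inflate the number of re-anchorings per depth, so that Lemma~\ref{lem:main_lemma} transfers, and pinning down that ``allowed to move'' coincides with ``actually moves'' precisely as long as a dangling edge remains. The move-counting in the remaining paragraphs is robust exactly because it is a global potential argument that never relies on a robot completing an anchor visit.
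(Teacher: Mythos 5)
Your overall architecture is sound, and it actually diverges from the paper in a useful way: where the paper adapts Claim~3 and accounts the adversary-allowed moves per \emph{completed} root-to-anchor-to-root trip, you replace that by a global decomposition of actual moves into \texttt{BF}, \texttt{DN}-down and \texttt{DN}-up moves together with the potential argument (each robot's depth is nonnegative, so its upward moves are at most its downward ones). This handles uniformly the robots that the adversary blocks forever in mid-trip, a case the paper's trip-based restatement of Claim~3 does not explicitly cover. Your use of Lemma~\ref{lem:main_lemma} with the $k(\log(k)+2)$ bound is also legitimate; the paper confirms precisely this, noting that only the $\log(\Delta)$ part of the lemma fails under breakdowns (the adversary can stall every robot sent to a single anchor), which is why the proposition is stated with $\log(k)$.

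There is, however, one genuine flaw, in the closing step: the assertion that while a dangling edge remains, ``every allowed robot genuinely moves (a robot with $M_{ti}=1$ at the root is re-anchored and makes a \texttt{BF} move\dots)''. This is false. When the root itself is adjacent to a dangling edge, \texttt{Reanchor} assigns the robot to the root (depth $0$ is minimal), its stack is then empty, and it executes \texttt{DN} at the root; if all root-adjacent dangling edges have already been selected by other allowed robots in the same round, the robot selects $\texttt{up}=\perp$ and stalls even though $M_{ti}=1$ and dangling edges abound. This is not an exotic scenario: it occurs already at round $0$ whenever the degree of the root is smaller than $k$. The repair is short: such a round can occur at most once in the entire execution, because in it every root-adjacent dangling edge is selected by an allowed robot and is therefore traversed in that very round, and no new root-adjacent dangling edge can appear after round $0$; hence at most $k$ allowed moves are ever ``wasted'' on non-moves while exploration is incomplete. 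Since the gap between your move-count bound $2(n-1)+D(D-1)k(\log(k)+2)$ and the hypothesis $kA(\mathbb{M})\geq 2n+kD^2(\log(k)+2)$ is at least $2+kD(\log(k)+2)\geq 2+2k$, these $k$ wasted allowed moves are absorbed and the contradiction still goes through. With that one-line correction (which plays the role of the paper's adapted Claim~1), your proof is complete.
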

\begin{proof} Again, the proof is very similar to that of Theorem \ref{th:BFDN_main} and all claims \texttt{1-5} all naturally adapt to this setting. As an example, we adapt the third claim as follows.

\setcounter{claim}{2}
\begin{claim}[Restated]
Consider a sequence of moves by some $\texttt{Robot}_i$ starting at the root with the assignment of an anchor $v$ of depth $\depth(v)=d$ and ending with the return of the robot to the root. We denote by $\T_x$ the number of moves that $\texttt{Robot}_i$ was allowed to perform during this sequence by the adversary. In this sequence, $\texttt{Robot}_i$ has explored exactly $( \T_x-2d)/2$ dangling edges.
\end{claim}
The adversarial nature of the urns and balls game described in Section \ref{sec: game} also makes it applicable to the present setup, and the main lemma straightforwardly holds except for the $\log(\Delta)$ guarantee. Indeed, the adversary could choose to block all robots at a specific anchor until all $k$ robots reach that anchor, which happens after at most $k(\log(k)+1)$ anchor assignements.
\end{proof}
\begin{remark}
Other adversarial settings could be considered, for instance with an adversary that observes the moves that the robots have selected before choosing which robots to block. Another extension of interest would consist in relaxing the slotted time assumption to consider instead continuous time evolution, which could capture more realistic scenarios, with varying robot speeds. 

\end{remark}

\subsection{Collaborative exploration of non-tree graphs}
The algorithm \texttt{BFDN} described above can be executed on a graph, with a minor modification: any robot that lands on a node discovered earlier by another robot should go back from where it came and ``close'' the corresponding edge (this edge will never be used again). A similar technique was already proposed by \cite{brass2011multirobot} to adapt the algorithm of \cite{fraigniaud2006collective} to graphs. Unfortunately, without further assumption, the guarantees of \texttt{BFDN} do not generalize to graphs with $n$ \textit{edges} and \textit{radius} $D$, where the radius is defined as the maximum distance between a node and the origin of the robots.

We therefore make the additional assumption that at any given node, a robot knows its distance to the origin in the underlying graph. Though restrictive, this assumption holds in some contexts of interest. It is for instance satisfied for the exploration of grid graphs with rectangular obstacles considered in  \cite{ortolf2012online} because in such graphs, the distance to the origin of any node with coordinates $(i,j)\in \Nbb^2$ is always exactly equal to the so-called Manhattan distance $i+j$.

In that context, consider the algorithm \texttt{BFDN}, with the modification that a robot exploring some edge $e$ for the first time will backtrack and ``close'' this edge if either of these two conditions is satisfied: (1) $e$ led to a node that is already discovered (2) $e$ led to a node that is not strictly further to the origin than its first endpoint. We then have the following 
\begin{proposition}
Given a graph $G=(V,E)$ with $n$ edges, diameter $D$ and maximum degree $\Delta$, assume that the $k$ robots are aware at all times of their distance to the origin and implement the above variant of {\normalfont \texttt{BFDN}}. Then collaborative exploration of the graph is completed in at most $\frac{2n}{k} + D^2(\min\{\log(\Delta),\log(k)\}+2)$ rounds.
\end{proposition}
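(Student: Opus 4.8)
The plan is to reduce the graph exploration to the tree case of Theorem~\ref{th:BFDN_main} by exhibiting a \emph{virtual tree} that the modified \texttt{BFDN} effectively explores, and then transporting the entire analysis (Claims 1--5 and Lemma~\ref{lem:main_lemma}) almost verbatim. The first step is to identify the spanning structure. For every non-root node $v$, call its \emph{discovery edge} the first edge ever traversed into $v$. Since a discovery edge is by definition not closed, condition (2) forces it to lead to a strictly farther node, and because the two endpoints of any graph edge differ in distance to the origin by at most one, every discovery edge increases that distance by exactly one. Hence the discovery edges form a spanning tree $T_G$ rooted at the origin in which the depth of each node equals its distance in $G$; in particular $T_G$ has depth at most $D$. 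This is precisely where the assumed knowledge of the distance to the origin is used: condition (2) is what caps the depth at $D$, whereas condition (1) alone would only guarantee an acyclic spanning structure of possibly much larger depth.

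Next I would build the virtual tree $\tilde T$ by attaching to $T_G$ one pendant leaf for every \emph{closed} edge, hanging it from the endpoint from which that edge was first traversed. Because each of the $n$ edges of $G$ is traversed for the first time exactly once and by a single robot (the analogue of Claim 2, which holds since dangling edges are crossed only in \texttt{DN} moves and are never selected by two robots at once), the $n$ edges split into the $|V|-1$ discovery edges and the remaining closed edges, so $\tilde T$ has $|V| + (n-|V|+1) = n+1$ nodes. I would then argue that, move for move, running the modified \texttt{BFDN} on $G$ coincides with running plain \texttt{BFDN} on $\tilde T$: traversing a closed edge $u \to w$ and immediately backtracking corresponds to descending to the pendant leaf of that edge and returning, so the \texttt{DN} phase inside a subtree is an honest depth-first traversal in which down-moves and up-moves balance. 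Claim 3 then holds unchanged, giving $(\T_x - 2d)/2$ explored edges per excursion, and summing over all excursions yields $\sum_{d}\sum_{x \in X_d} (\T_x - 2d)/2 \le n$.

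The crux is to certify that the reanchoring analysis survives, i.e. that Claim 5 and Lemma~\ref{lem:main_lemma} still hold. The key point to verify is that a robot never leaves the subtree of $T_G$ attached to its current excursion: whenever it crosses a dangling edge that exits this subtree (a ``cross edge'' of $G$), that edge leads either to an already-discovered node or to a node no farther from the origin, so by conditions (1)--(2) it is immediately closed and the robot backtracks. Consequently each node at depth $d$ still hosts at most one robot whenever all anchors sit at depth at most $d-1$, the balls-in-urns reduction of Section~\ref{sec: game} applies with no change, and Lemma~\ref{lem:main_lemma} bounds the number of depth-$d$ reanchorings by $k(\min\{\log(k),\log(\Delta)\}+2)$.

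Feeding these counts into the computation of Theorem~\ref{th:BFDN_main} --- now with $n+1$ virtual nodes in place of $n$ and with anchor depths bounded by $D$ --- reproduces the runtime $\frac{2n}{k} + D^2(\min\{\log(\Delta),\log(k)\}+2)$. I expect the main obstacle to be exactly this transfer of the single-robot-per-subtree property to the graph setting, where cross edges must be shown to be harmless, together with a careful bookkeeping of the boundary depths (the pendant leaves sit one level below their anchors, and distance-$D$ nodes may still carry closing edges) so that these lower-order contributions collapse into the stated $D^2$ factor.
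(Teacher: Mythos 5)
Your proposal follows essentially the same route as the paper's own (very terse) proof: the never-closed edges form a BFS tree of depth $D$, each closed edge costs at most two traversals, and the analysis of Theorem~\ref{th:BFDN_main} is then re-invoked; your virtual tree $\tilde T$ with pendant leaves is a reasonable formalization of that sketch. However, two of your steps contain genuine gaps. The first is the claim that ``a discovery edge is by definition not closed'': under your actual definition (the first edge ever traversed into $v$) this is false, not definitional. A robot can first enter $v$ through an edge coming from a node at the \emph{same} distance to the origin, and condition (2) then closes that very edge. Worse, if such an entry counted as discovering $v$, condition (1) would subsequently close every other edge into $v$ as well, closed edges are never reused, so the non-closed edges would not span the graph and nodes behind $v$ could become unreachable. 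The reduction only works under the convention that a node entered via an edge that is immediately closed is \emph{not} marked as discovered; then each node is discovered exactly once, via a distance-increasing edge, and your spanning tree $T_G$ exists with depth equal to graph distance. This convention must be stated --- it is exactly what is hidden in the paper's ``it is clear that''.

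The second gap is your justification of the analogue of Claim~2: ``dangling edges \dots are never selected by two robots at once'' does not transfer from trees to graphs. In a tree a dangling edge has a single discovered endpoint, but in a graph a dangling edge can have \emph{both} endpoints discovered, each seeing a dangling stub that the shared map need not recognize as the same edge; two robots can then traverse that edge simultaneously from opposite ends in the same round. This is precisely why the paper's proof introduces the identity-swap device (``or once by two robots, each coming from both endpoints, that will swap their identities''): when two robots cross on an edge, they exchange roles, so the edge is still charged only two traversals and the correspondence with a tree exploration is preserved. Without this device, or an explicit argument that unique edge identifiers let the planner merge the two stubs and forbid double selection, your move-for-move correspondence between the modified \texttt{BFDN} on $G$ and plain \texttt{BFDN} on $\tilde T$ fails exactly at these crossing events, and with it your transfer of Claims~2, 3 and 5.
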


\begin{proof}
    It is clear that at the end of the execution of this algorithm, the edges that have never been closed form a breadth-first tree of the graph of depth $D$. This tree is explored efficiently by our algorithm while the edges that were closed were traversed at most twice by a single robot (or once by two robots, each coming from both endpoints, that will swap their identities). This leads to a total runtime of at most $\frac{2n}{k} + D^2(\min\{\log(\Delta),\log(k)\}+2)$.
\end{proof}


\section{Recursive Algorithms for Improved Dependence on Depth $D$}
\label{sec:trading}
In this Section we develop a general recursive construction of so-called {\em anchor-based algorithms} which,  applied to \texttt{BFDN}, yields the following result. It can be seen as a generalization of Theorem~\ref{th:BFDN_main} as, for $\ell=1$, it provides the same upper-bound up to a factor four. 
\begin{theorem}\label{th:BFDN-ell}
  For any integer $\ell\ge 1$, \texttt{BFDN}$_\ell$, an associated recursive version of \texttt{BFDN}, explores a tree with $n$ nodes, depth $D$, maximum degree $\Delta$ with $k$ robots in $\frac{4n}{k^{1/\ell}} + 2^{\ell+1}(\ell+1+\min\set{\log(\Delta),\log(k)/\ell})\, D^{1+1/\ell}$ rounds.
\end{theorem}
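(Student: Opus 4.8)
The plan is to prove the bound by induction on $\ell$, viewing \texttt{BFDN}$_\ell$ as the image of the base algorithm \texttt{BFDN} $=$ \texttt{BFDN}$_1$ under $\ell-1$ applications of a single \emph{recursion operator} on anchor-based algorithms; the base case is exactly Theorem~\ref{th:BFDN_main} (inflated by the stated factor four). The guiding principle is that the recursion deliberately \emph{trades parallelism for depth}: each application raises the $n$-term from rate $k$ toward rate $k^{1/\ell}$ while lowering the $D$-exponent from $2$ toward $1+1/\ell$. I would formalize an anchor-based algorithm by the two quantities that drive the proof of Theorem~\ref{th:BFDN_main}, namely its \emph{useful-work rate} (total edge traversals $2n$ shared among the active agents) and its \emph{depth overhead} (breadth-first travel to anchors, controlled through Lemma~\ref{lem:main_lemma}), and then track how the operator transforms this pair.

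For the recursive step, assuming the order-$(\ell-1)$ guarantee, I would construct \texttt{BFDN}$_\ell$ as follows. Partition the $k$ robots into $\kappa = k^{1/\ell}$ \emph{teams} of $k^{1-1/\ell}$ robots each, and slice the depth range $[0,D]$ into \emph{windows} of depth $h = D^{(\ell-1)/\ell}$, so that the induced \emph{coarse tree}, whose nodes are the window boundaries (depths multiples of $h$), has coarse depth $D_c = D/h = D^{1/\ell}$. The teams play the role of agents in a coarse anchor-based process run on the coarse tree, using the same \texttt{Reanchor} balancing rule so that Lemma~\ref{lem:main_lemma} and Remark~\ref{rem:rem1} apply with $\kappa$ agents; whenever a team is anchored at a coarse node $v$, it explores the depth-$h$ slab of $T(v)$ below it by recursively running \texttt{BFDN}$_{\ell-1}$ on its $k^{1-1/\ell}$ robots. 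Claims~1--5 transfer \emph{mutatis mutandis} to this coarse process, with the single change that traversing one coarse edge (walking a team through one already-explored window) now costs $h$ rounds rather than one.

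The accounting then splits into the two familiar pieces. \emph{Coarse travel:} by Theorem~\ref{th:game}/Lemma~\ref{lem:main_lemma} applied at the coarse level, each coarse depth admits at most $\kappa(\log\kappa+2)$ reassignments with $\log\kappa = \log(k)/\ell$, and a dispatch to coarse depth $j$ costs $2jh$ rounds; summing over $j\le D_c$ and dividing by the $\kappa$ teams yields a travel overhead of order $(\log(k)/\ell)\,h\,D_c^2 = (\log(k)/\ell)\,D^{1+1/\ell}$, which is precisely where the single surviving $\log$ factor and the $\min\set{\log\Delta,\log(k)/\ell}$ refinement come from. \emph{Useful plus recursive work:} the $2n$ edge traversals are now shared only at rate $k^{1/\ell}$, giving the $4n/k^{1/\ell}$ term, while the recursive exploration inside each slab contributes, per coarse depth band, one order-$(\ell-1)$ depth overhead $C_{\ell-1}\,h^{1+1/(\ell-1)} = C_{\ell-1}\,D$; summing over the $D_c = D^{1/\ell}$ bands gives $C_{\ell-1}\,D^{1+1/\ell}$. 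Combining the two $D^{1+1/\ell}$ contributions produces a recurrence of the shape $C_\ell = 2C_{\ell-1} + \Theta(2^\ell)$ together with the additive ``$+1$'' per level, which I would solve by induction to the closed form $2^{\ell+1}(\ell+1+\min\set{\log(\Delta),\log(k)/\ell})$.

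The main obstacle I anticipate is the recursive-overhead bookkeeping: rigorously justifying that the within-window order-$(\ell-1)$ overhead may be charged \emph{once per coarse depth band} rather than once per window, even though teams enter and leave individual slabs as the coarse reassignments proceed. This amounts to showing that the reassignment/game analysis \emph{composes} across the two levels in the sense of Remark~\ref{rem:rem1} (a slab stays ``eligible'' until a team returns from it), so that the $\log$ factors of successive levels do not accumulate multiplicatively and only the coarsest $\log(k)/\ell$ remains, the remaining per-level cost being absorbed into the geometric factor $2^{\ell+1}$. Secondary but routine points are bounding the idle time introduced by the coarse structure and verifying that the crude recursive-overhead estimate does not reintroduce a spurious factor of $n$, both of which I expect to handle by the Claim~3--style identity relating a team's active rounds in a slab to the edges it explores there.
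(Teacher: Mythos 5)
Your high-level architecture (teams of $k^{1-1/\ell}$ robots, depth windows of height $D^{(\ell-1)/\ell}$, recursion down to a base \texttt{BFDN} instance) matches the paper's divide-depth construction, but there are two genuine gaps, and the first is precisely the obstacle you flag without resolving. You run the \texttt{Reanchor} balancing game \emph{at every level}: the coarse process among $\kappa=k^{1/\ell}$ teams is itself a copy of the urns-and-balls game, and so is each inner level down to $\texttt{BFDN}_1$. Since Lemma~\ref{lem:main_lemma} allows $\kappa(\log\kappa+2)$ coarse reassignments per coarse depth, a team can be dispatched into a given band $\Theta(\log\kappa)$ times, and each dispatch (re)starts an order-$(\ell-1)$ instance whose overhead $C_{\ell-1}D$ you then must pay \emph{per reassignment}, not once per band. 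Without a mechanism forcing one execution per band, the recurrence is multiplicative, $C_\ell \approx (\log(k)/\ell)\,C_{\ell-1}$, i.e.\ $C_\ell\approx(\log(k)/\ell)^\ell$, which destroys the bound; and even under your optimistic ``once per band'' charging, each of the $\ell$ levels contributes $\log(k)/\ell$, summing to $\log k$, so your claim that ``only the coarsest $\log(k)/\ell$ remains'' is false. The paper resolves this by a structurally different coarse level: the divide-depth functor $\Dcal$ plays \emph{no} game at all above the base level. At the start of each iteration there are fewer than $k^*\le n_{team}$ unfinished subtrees (guaranteed by the Shallow Activity invariant and the rule that an iteration is interrupted exactly when the number of active robots drops below $k^*$, combined with Open Node Coverage), so one team is statically assigned per subtree, all instances run in parallel until global synchronization, and overheads compose \emph{additively} via the shallow-efficiency recurrence $f=n_{iter}(f'+d)$ of Proposition~\ref{prop:divide-depth}, giving $c_\ell(k)=c_1(k^{1/\ell})+\ell-1$. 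The surviving factor $\min\{\log\Delta,\log(k)/\ell\}$ thus comes from the \emph{innermost} instances ($\texttt{BFDN}_1$ run with $k^{1/\ell}$ robots), not from the coarsest level as you assert.

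The second gap is the origin of $2^{\ell+1}$ and the treatment of the unknown depth. Your construction needs the window height $h=D^{(\ell-1)/\ell}$, but $D$ is not known to an online algorithm, and your recurrence ``$C_\ell=2C_{\ell-1}+\Theta(2^\ell)$'' is reverse-engineered from the target rather than derived: in the correct analysis the per-level constant grows only linearly in $\ell$. The factor $2^{\ell+1}$ arises elsewhere, from guessing the depth: the paper runs $\DFBNmath_\ell(k^{1/\ell},k,d_j)$ on the doubling sequence $d_j=2^{j\ell}$, interrupting each call after its last iteration, and the geometric sum $\sum_j d_j^{1+1/\ell}\le 2^{\ell+1}D^{1+1/\ell}$ produces that coefficient. (Relatedly, the factor $4$ in $4n/k^{1/\ell}$ is not an ``inflation of the base case'' but comes from rounding $k$ down to $K=\lfloor k^{1/\ell}\rfloor^{\ell}$ so that all recursive parameters are integers.) Any complete proof along your lines would have to add both ingredients: a coarse level without reassignment games, and a doubling scheme for the unknown depth.
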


To describe our recursive construction we need
the following definitions.
Given a node $v$ in a tree $T$, $P_T[v]$ denotes the path from $v$ to the root of $T$, and  $P_T(v)=P_T[v]\setminus\{v\}$.
Given two nodes $u,v$ in a tree $T$,  $\lca_T(u,v)$ denotes their lowest common ancestor in $T$.
We say that a discovered node is \emph{open} as long as it has at least one dangling adjacent edge. We say that it is \emph{closed} as soon as a robot has traversed its last dangling edge.
Note that open nodes are the parents of dangling edges.
We decompose the exploration of an edge into two edge events as follows.
An \emph{edge event} occurs when a robot traverses an edge from parent to child for the first time, or when a robot traverses an edge from child to parent for the first time. There are thus at most $2(n-1)$ edge events in any exploration. 
Edges for which only one event has occurred are said to be \emph{half explored}.

\paragraph{Anchor-based algorithm.} 
Given $k$ robots, an activity parameter $k^*\in [k]$, and a depth $d$, an \emph{anchor-based} algorithm $\Acal(k^*,k,d)$ is by definition an exploration algorithm by $k$ robots meeting the following requirements. Each robot is in one of the two states \emph{active} or \emph{inactive}. Each active robot $i$ is assigned to a node $v_i$ of the tree called its \emph{anchor}.
The algorithm must explore the tree so as to bring anchors at depth $d$ while maintaining a list of invariants. The full list of so-called ``Anchor-based invariants'' is given in Appendix~\ref{app:anchor-based}. It mainly includes a variant of Claim~4 called \emph{Open Node Coverage} which specifies that all open nodes must always be in $\cup_{i\in A}T(v_i)$ where $A$ is the set of active robots. Other invariants mainly specify properties of the positions of the robots with respect to the partially explored tree and ensure that we can start an execution of an anchor-based algorithm after having interrupted the execution of another anchor-based algorithm.

Initially, the algorithm starts from any partially explored tree, with all robots active and anchored at the root. Robots must be in so-called \emph{Parallel DFS Positions}, a requirement ensuring that all invariants are initially satisfied (see Appendix~\ref{app:anchor-based}).
Active robots are allowed to move and explore the tree while inactive robots must be at depth at most $d$ and wait. We distinguish two phases in the execution of the algorithm. As long as some anchor is at depth less than $d$ or is not closed, we say that the algorithm runs \emph{shallow}. During this first ``shallow'' phase, the algorithm must have at least $k^*$ active robots at all rounds. When all anchors are at depth $d$ and are all closed, we say that the algorithm runs \emph{deep}. In this second ``deep'' phase, it is required that all active robots trigger an edge event at each round. However, the number of active robots may get below $k^*$ during that phase. At any round, the algorithm may turn a robot into inactive or active as long as the requirements for the two phases are met. Finally, the algorithm can terminate when all robots are inactive. The Open Node Coverage invariant implies that the tree is then completely discovered (see Appendix~\ref{app:anchor-based}).

\paragraph{Divide depth functor.} We now define the \emph{divide depth functor}  $\Dcal$, a map that takes an anchor-based algorithm and transforms it into another anchor-based algorithm as follows. Given an anchor-based algorithm $\Acal(k^*,k',d')$, a number $n_{team}$ of teams and a number $n_{iter}$ of iterations, we construct the exploration algorithm $\Dcal[\Acal(k^*,k',d'); n_{team}; n_{iter}]$ for terminating the exploration of a partially explored tree. 
It uses $k=n_{team}k'$ robots for  exploring the tree up to depth $d=n_{iter}d'$ in $n_{iter}$ iterations where each iteration makes anchors progress $d'$ deeper. More precisely, the $i$-th iteration runs parallel instances of $\Acal(k^*,k',d')$ in at most $n_{team}$ sub-trees rooted at nodes with depth $(i-1)d'$. We assume that the previous iteration has terminated with a set $R$ of at most $k^*\le n_{team}$ anchors at depth $(i-1)d'$. Relying on the Open Node Coverage invariant, we then restrict the exploration to the sub-trees rooted in $R$. Robots are thus partitioned into $n_{team}$ teams of $k'$ robots each. Each node $r\in R$ is taken in charge by a distinct team which runs an instance $\Acal_r(k^*,k',d')$ of $\Acal(k^*,k',d')$ on $T(r)$. When $|R|<n_{team}$, all robots in unassigned teams are inactive and wait at their position until the end of the current iteration. All other teams explore in parallel their sub-trees. We interrupt all running instances simultaneously when the overall number of active robots gets below $k^*$ so that we can use their anchors as roots in the next iteration. As any single instance has activity parameter $k^*$ this cannot happen until all anchors are at depth $d'$ in each sub-tree, that is depth $i\cdot d'$ in $T$.
After $n_{iter}$ iterations, this guarantees that all nodes up to depth $d$ have been closed and that exploration finally continues in at most $k^*$ sub-trees rooted at depth $d$. See Appendix~\ref{sec:pseudocode_div_depth} for a formal description of the resulting anchor-based algorithm $\Bcal(k^*,k,d)=\Dcal[\Acal(k^*,k',d'); n_{team}; n_{iter}]$.

We say that an anchor-based $\Acal(k^*,k,d)$ algorithm has \emph{$f$-shallow efficiency} for parameter $f$ if it triggers at least $k^*(\T - f)$ edge events when running shallow during $\T$ rounds where parameter $f$ may depend on $k$ and $d$.
We then have the following 

\begin{proposition}\label{prop:divide-depth}
  Given an anchor-based algorithm $\Acal(k^*,k',d')$, integers $n_{team}\ge k^*$ and $n_{iter}\ge 1$,
  $\Dcal[\Acal(k^*,k',d'); n_{team}; n_{iter}]$ is correct and it is an anchor-based exploration algorithm $\Bcal(k^*,k,d)$ for $k=n_{team}k'$ robots with depth $d=n_{iter}d'$.
 If moreover $\Acal(k^*,k',d')$ has $f'$-shallow efficiency,  then $\Dcal[\Acal(k^*,k',d'); n_{team}; n_{iter}]$ has $f$-shallow efficiency with $f=n_{iter}f'+n_{iter}^2d'=n_{iter}(f'+d)$.
\end{proposition}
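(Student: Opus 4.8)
The plan is to verify Proposition~\ref{prop:divide-depth} in three stages: first establish that $\Dcal[\Acal(k^*,k',d'); n_{team}; n_{iter}]$ is a correct exploration algorithm, then verify that it satisfies the definition of an anchor-based algorithm $\Bcal(k^*,k,d)$ with the claimed parameters, and finally compute its shallow efficiency. Correctness follows from the construction: each iteration relies on the Open Node Coverage invariant of the previous iteration, which guarantees that all open nodes lie in $\cup_{r\in R} T(r)$, so restricting exploration to the sub-trees rooted at $R$ loses no dangling edges. Since each inner instance $\Acal_r(k^*,k',d')$ is itself a correct anchor-based algorithm with activity parameter $k^*$, it cannot drop below $k^*$ active robots until all its anchors reach depth $d'$ and are closed; therefore interrupting all instances when the global active count first falls below $k^*$ occurs only after every sub-tree has pushed its anchors to depth $d'$, i.e. to depth $i\cdot d'$ in $T$. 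After $n_{iter}$ iterations, anchors reach depth $n_{iter}d'=d$, and termination of the last iteration leaves at most $k^*$ closed anchors at depth $d$, as required.

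Second, I would check the anchor-based invariants for $\Bcal(k^*,k,d)$. The key point is that the invariants are designed precisely so that one anchor-based execution can be interrupted and another started from the resulting configuration; this is exactly what happens between consecutive iterations. I would verify that at the moment of interruption the robots are in valid \emph{Parallel DFS Positions} (so the next iteration's instances can start), that inactive robots remain at depth at most $d$, and that Open Node Coverage is preserved across the iteration boundary. During the shallow phase of $\Bcal$ (anchors not yet all at depth $d$ and closed), at least $k^*$ robots must be active: since each running inner instance maintains $k^*$ active robots while shallow and we interrupt precisely when the total drops below $k^*$, the global shallow-phase requirement holds. The deep phase, where all active robots trigger an edge event each round, is inherited from the inner instances' deep behavior once all anchors reach depth $d$. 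Since $k=n_{team}k'$ and $d=n_{iter}d'$ by construction, the parameters match.

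Third, and this is where the quantitative work lies, I would bound the shallow efficiency. Fix the shallow running time $\T$ of $\Bcal$ and decompose it across the $n_{iter}$ iterations, writing $\T=\sum_{i} \T_i$ where $\T_i$ is the number of rounds spent in iteration $i$. Within iteration $i$, each active inner instance has $f'$-shallow efficiency, so it triggers at least $k^*(\T_i - f')$ edge events while shallow. The subtlety is accounting for the rounds each team spends \emph{reaching} its assigned root at depth $(i-1)d'$ before its shallow exploration begins, and for teams that sit idle because $|R|<n_{team}$: these contribute to $\T_i$ without producing the full $k^*$ edge events. The ``wasted'' rounds per iteration for travel to depth $(i-1)d'$ contribute the additive $d=n_{iter}d'$ term, and summing the per-iteration slack $f'$ over $n_{iter}$ iterations yields $n_{iter}f'$, giving $f=n_{iter}f'+n_{iter}^2 d'=n_{iter}(f'+d)$.

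The main obstacle I anticipate is the careful bookkeeping in the shallow-efficiency computation, specifically attributing exactly the right number of ``lost'' edge events to robot travel and idle waiting. I expect the cleanest route is to charge, in each iteration, at most $f'$ rounds to the inner-instance slack and at most $d'$ rounds per iteration to the travel from the root of the sub-tree down to depth $(i-1)d'$, then observe that summing the travel term across all $n_{iter}$ iterations telescopes to $n_{iter}\cdot d = n_{iter}^2 d'$; the interaction between the global activity threshold (we interrupt when the \emph{total} active count dips below $k^*$, not each instance separately) and the per-instance shallow-efficiency guarantee will require the most care to make rigorous.
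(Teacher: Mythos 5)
Your overall architecture mirrors the paper's proof exactly: correctness and the anchor-based invariants are verified by induction over iterations using Open Node Coverage to justify restricting to the sub-trees rooted at $R$, Parallel DFS Positions at iteration boundaries, and the per-instance activity parameter to argue that interruption can only occur once all anchors reach depth $i\cdot d'$; the shallow efficiency is then obtained by a per-iteration decomposition charging $f'$ of slack plus travel cost, summing to $f=n_{iter}f'+n_{iter}^2d'$. The final bookkeeping is the paper's as well (the paper bounds total travel by $\sum_{i=1}^{n_{iter}}2(i-1)d' < n_{iter}^2 d'$; your phrase ``at most $d'$ rounds per iteration'' is a slip — travel in iteration $i$ costs $2(i-1)d'$ rounds, i.e.\ up to roughly $2d$ — but your total $n_{iter}\cdot d = n_{iter}^2 d'$ is the intended bound).

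There is, however, one genuine gap, and it is precisely the step you flag as ``requiring the most care'' without resolving it. Your central claim, that within iteration $i$ ``each active inner instance has $f'$-shallow efficiency, so it triggers at least $k^*(\T_i-f')$ edge events while shallow,'' is false as stated: each instance $\Acal_r$ only guarantees $k^*(\T^{sh}_r-f')$ edge events over \emph{its own} shallow window $\T^{sh}_r$, these windows differ across instances, and — crucially — the rounds of iteration $i$ occurring after \emph{every} instance has gone deep but before the global active count drops below $k^*$ are covered neither by any instance's shallow efficiency nor by your travel charge. The paper closes this by splitting the post-travel rounds of iteration $i$ into $\T^1$ (at least one instance not yet deep) and $\T^2$ (all instances deep): since an instance passes from shallow to deep monotonically, the last instance to go deep runs shallow during all $\T^1$ rounds and alone contributes at least $k^*(\T^1-f')$ edge events, while during the $\T^2$ rounds the Deep Activity invariant of every instance combined with the interruption rule (the iteration continues only while $\card{\cup_{r\in R}A_r}\ge k^*$) yields at least $k^*$ edge events per round. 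This gives $k^*(\T_i - f' - 2(i-1)d')$ events in iteration $i$, and summation over $i$ produces the stated $f$. Your proposal needs this two-phase argument (or an equivalent one) to be a proof; without it, the per-iteration bound does not follow from the definitions.
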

Its proof is deferred to Appendix~\ref{sec:pseudocode_div_depth}.
The reason for $f$-shallow efficiency is the following.
Consider the $i$-th iteration of $\Dcal_{\Acal,k',d'}(k^*,k,d)$. Moving robots towards their associated root takes $2(i-1)d'$ rounds. Now, count the number $\T^1$ of rounds where at least one of the instances has not run deep. As such an instance has run shallow during $\T^1$ rounds, it has triggered at least $k^*(\T^1 - f')$ edge events by $f'$-shallow efficiency of $\Acal(k^*,k,d)$. During the remaining $\T^{2}$ rounds of the iteration, all instances run deep. As this continues as long as $k^*$ robots or more are active, at least $k^*$ edge events are triggered per round, that is $k^*\T^2$ or more in total. Letting $\T_i=2(i-1)d'+\T^1+\T^2$ denote the number of rounds spent in the $i$th iteration, the number of edge events triggered during that iteration is thus at least $k^*(\T_i-f'-2(i-1)d')$. The algorithm runs shallow during the $n_{iter}$ iterations which last overall $\T=\sum_{i=1}^{n_{iter}}\T_i$. By summation, we get that it then triggers at least $k^*(\T-n_{iter}f'-n_{iter}^2d')$ edge events as 
$\sum_{i=1}^{n_{iter}}(i-1)<n_{iter}^2/2$.

Our first candidate for applying the divide depth functor is the following 
variant of \texttt{BFDN}.

\paragraph{BFDN}
In the sequel, we call $\DFBNmath_1(k,k,d)$ the modification of Algorithm~\ref{alg:BFDN_algo} where the procedure \texttt{Reanchor} is modified for assigning anchors at depth at most $d$. Precisely, we replace Line~\ref{line:anchorset} with:
$$
A = \{v \in V ~~ s.t. ~~ v \text{ is adjacent to some unexplored edge and } \delta(v) \text{ is minimal and } \delta(v)\le d\}.
$$
Note that this modification implies that when there are no more dangling edges at depth at most $d$, robots start to be anchored to the root and are then considered as inactive. 
Note that according to Claim~5 for depth $d+1$, there still remains exactly one robot in each sub-tree rooted at depth $d+1$ which is not entirely discovered. These robots remain active until they have completely explored their sub-tree. 
$\DFBNmath_1(k,k,d)$ thus terminates only when the tree has been fully discovered. 
We also slightly modify the anchoring of robots: when a robot $i$ is anchored at $v_i$ it might happen that there are no more dangling edges at depth $\depth(v_i)$ or less thanks to the exploration of other robots. If this happens when $v_i\in P(u_i)$ and $\depth(v_i)<d$, we re-anchor robot $i$ at the children of $v_i$ in $P[u_i]$.
This modification does not change the movements of robot $i$ as it is then in a sequence of depth-next moves and will go up when reaching $v_i$ anyway. However, this modification will ensure the preservation of the Partial Exploration invariant defined in Appendix~\ref{app:anchor-based}. It also implies that when there are no more dangling edges at depth at most $d$, all anchors are then at depth $d$.

One can then easily check that $\DFBNmath_1(k,k,d)$ is an anchor-based algorithm. For example, the Open Node Coverage invariant is shown as Claim~4; see Appendix~\ref{app:anchor-based} for more details.

We also note that $\DFBNmath_1(k,k,d)$ has $c_1(k)d^2$-shallow efficiency where $c_1(k)=\minkdel + 2$.
Indeed, $\DFBNmath_1(k,k,d)$ runs exactly as Algorithm~\ref{alg:BFDN_algo} as long as there are dangling edges at depth at most $d$, that is as long as the algorithm is running shallow. If this phase lasts $\T$ rounds, it triggers at least $k(\T-c_1(k)d^2)$ edge events. The proof is similar to that of Theorem~\ref{th:BFDN_main} using Lemma~\ref{lem:main_lemma} with the slight subtlety that we count edge events. The reason is that when starting from a partially explored tree where robots are in Parallel DFS Positions, the moves when robots go up still trigger edge events although no new edge may be discovered. 

\paragraph{The $\DFBNmath_\ell(k^*,k,d)$ anchor-based algorithm.} 

We construct recursively a series of algorithms $\DFBNmath_\ell(k^{1/\ell},k,d)$ for $\ell \ge 1$ as follows.
Assuming that $k$ and $d$ are both $\ell$-th powers of integers, we define for $\ell\ge 2$ the algorithm $\DFBNmath_\ell(k^*,k,d) := \Dcal[\DFBNmath_{\ell-1}(k^*,k/n_{team},d/n_{iter}); n_{team}; n_{iter}]$ with $k^*=n_{team}=k^{1/\ell}$ and $n_{iter}=d^{1/\ell}$. We let $k'=k/n_{team}=k^{(\ell-1)/\ell}$ and $d'=d/n_{iter}=d^{(\ell-1)/\ell}$ denote the parameters used for $\DFBNmath_{\ell-1}$. Note that $k'$ and $d'$ are both $(\ell-1)$-th powers of integers and recursive calls all have integer-valued parameters. The activity parameter of instances $\DFBNmath_{\ell-1}(k^*,k',d')$ indeed satisfies $(k')^{1/(\ell-1)}=k^{1/\ell}=k^*$.  As we use $n_{team}=k^*$, we indeed respect the constraint $k^*\le n_{team}$.
We can bound its shallow efficiency according to the following statement:
\begin{lemma}\label{lem:BFDN-ell-shallow-efficiency}
  Given an integer $\ell\ge 2$, two integers $k$ and $d$ that are both $\ell$th powers of integers, $\DFBNmath_\ell(k^{1/\ell},k,d)$ is $c_\ell(k)d^{1+1/\ell}$-shallow efficient with $c_\ell(k)=c_1(k^{1/\ell})+\ell-1$.
\end{lemma}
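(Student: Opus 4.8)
The plan is to prove the statement by induction on $\ell$. The base case is $\ell=1$, which is exactly the $c_1(k)d^2$-shallow efficiency of $\DFBNmath_1(k,k,d)$ recorded just above (note $c_1(k)d^2 = c_1(k)\,d^{1+1/1}$ and $c_1(k)=c_1(k^{1/1})+1-1$, so the claimed formula is consistent at $\ell=1$). For the inductive step at $\ell\ge 2$, the entire argument reduces to feeding the induction hypothesis into Proposition~\ref{prop:divide-depth} and checking that the chosen parameters make the depth-exponent recursion collapse cleanly.

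First I would record that $\DFBNmath_\ell(k^{1/\ell},k,d) = \Dcal[\DFBNmath_{\ell-1}(k^*,k',d'); n_{team}; n_{iter}]$ with $k^*=n_{team}=k^{1/\ell}$, $n_{iter}=d^{1/\ell}$, $k'=k^{(\ell-1)/\ell}$ and $d'=d^{(\ell-1)/\ell}$, all of which are integers since $k$ and $d$ are $\ell$th powers. Because $(k')^{1/(\ell-1)}=k^{1/\ell}=k^*$, the inner algorithm $\DFBNmath_{\ell-1}(k^*,k',d')$ has activity parameter exactly $k^*$, so the induction hypothesis applies and gives that it is $f'$-shallow efficient with $f'=c_{\ell-1}(k')\,(d')^{1+1/(\ell-1)}$. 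The requirement $n_{team}\ge k^*$ of Proposition~\ref{prop:divide-depth} holds with equality, so the proposition applies and yields that $\DFBNmath_\ell(k^{1/\ell},k,d)$ is $f$-shallow efficient with $f=n_{iter}(f'+d)$.

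The computational heart is then two substitutions. The depth exponents are engineered so that $(d')^{1+1/(\ell-1)} = \bigl(d^{(\ell-1)/\ell}\bigr)^{\ell/(\ell-1)} = d$; this cancellation is precisely what keeps the depth dependence from blowing up across recursion levels. Plugging in gives
$$
f = d^{1/\ell}\bigl(c_{\ell-1}(k')\,d + d\bigr) = \bigl(c_{\ell-1}(k')+1\bigr)\,d^{1+1/\ell}.
$$
Second, I would unfold the constant: by the induction hypothesis, $c_{\ell-1}(k') = c_1\bigl((k')^{1/(\ell-1)}\bigr)+(\ell-1)-1 = c_1(k^{1/\ell})+\ell-2$, again using $(k')^{1/(\ell-1)}=k^{1/\ell}$. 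Hence $c_{\ell-1}(k')+1 = c_1(k^{1/\ell})+\ell-1 = c_\ell(k)$, so $f = c_\ell(k)\,d^{1+1/\ell}$, which is the claimed bound and closes the induction.

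I do not expect a genuine obstacle here; the content is essentially a one-line consequence of Proposition~\ref{prop:divide-depth}. The only care needed is bookkeeping of parameters: verifying that $k'$ and $d'$ remain $(\ell-1)$th powers of integers, that the inner activity parameter equals the outer $k^*$, and that the exponent identity $(d')^{1+1/(\ell-1)}=d$ is exactly what the choice $n_{iter}=d^{1/\ell}$ was designed to produce. Once those are checked, the recursion for both the depth power and the additive constant follows mechanically.
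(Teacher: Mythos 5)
Your proof is correct and follows essentially the same route as the paper: both arguments induct on the recursion level, feeding the shallow efficiency of $\DFBNmath_1$ (the base case) through Proposition~\ref{prop:divide-depth}, with the key cancellation $(d')^{1+1/(\ell-1)}=d$ driving the constant recursion $c_\ell = c_{\ell-1}+1$. The only difference is bookkeeping — you perform structural induction on $\ell$ with the lemma stated in full generality, whereas the paper fixes $(k,d,\ell)$ and inducts on the intermediate levels $\DFBNmath_j(k^{1/\ell},k^{j/\ell},d^{j/\ell})$ for $j=1,\ldots,\ell$ — which amounts to the same computation.
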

\begin{proof}
As $\DFBNmath_1(k^{1/\ell},k^{1/\ell},d^{1/\ell})$ is $c_1(k^{1/\ell})d^{2/\ell}$-shallow efficient, Proposition~\ref{prop:divide-depth} implies by induction 
that $\DFBNmath_j(k^{1/\ell},k^{j/\ell},d^{j/\ell})$ is $(c_1(k^{1/\ell})+j-1)d^{(j+1)/\ell}$-shallow efficient for $j=2,\ldots,\ell$.
\end{proof}

Algorithm \texttt{BFDN}$_\ell$ in Theorem~\ref{th:BFDN-ell} is then defined as
\begin{definition}
In case $k$ is the $\ell$-th power of some integer, consider the sequence of depths $d_j=2^{j\ell}$ for $j=1,2,\ldots$ Algorithm \texttt{BFDN}$_\ell$ consists in running $\DFBNmath_\ell(k^{1/\ell},k,d_1)$, interrupting it right after its last iteration (without running deep further), then running $\DFBNmath_\ell(k^{1/\ell},k,d_2)$ with the current robot positions and anchor assignments until its last iteration finishes, and so on. When running $\DFBNmath_\ell(k^{1/\ell},k,d_j)$ with  $j=\ceil{\frac{\log_2 D}{\ell}}$, all anchors reach depth $D$ and the algorithm terminates. If $k$ is not an integer to the power $\ell$, we only use $K=\floor{k^{1/\ell}}^\ell\le k$ and apply the previous construction. 
\end{definition}

\begin{proof} (of Theorem~\ref{th:BFDN-ell})
Assume first that $k$ is the $\ell$-th power of some integer. 
In a run of \texttt{BFDN}$_\ell$, denote by $\T_j$ the number of rounds that the call to $\DFBNmath_\ell(k^{1/\ell},k,d_j)$ lasts. This call triggers at least $k^{1/\ell}(\T_j-c_\ell(k)d_j^{1+1/\ell})$ edge events by applying Lemma~\ref{lem:BFDN-ell-shallow-efficiency}. We can thus bound the overall running time $\T=\sum_{j=1}^{\ceil{(\log_2 D)/\ell}}\T_j$ by summing over all calls:
$
  2n \ge  k^{1/\ell} \left( \T - c_\ell(k)\sum_{j=1}^{\ceil{(\log_2 D)/\ell}} d_j^{1+1/\ell} \right)
$. 
As we have 
$
  \sum_{j=1}^{\ceil{(\log_2 D)/\ell}}d_j^{1+1/\ell}
  = \sum_{j=1}^{\ceil{(\log_2 D)/\ell}} 2^{(\ell+1)j}
  \le \frac{2^{(\ell+1) ((\log_2 D)/\ell+2)} - 1}{2^{\ell+1}-1}
  \le 2^{\ell+1} D^{1+1/\ell},
$
we obtain
\begin{align*}
\T \le \frac{2n}{k^{1/\ell}} + 2^{\ell+1}c_\ell(k)D^{1+1/\ell}.
\end{align*}
\smallskip
For arbitrary $k$, with $K=\floor{k^{1/\ell}}^\ell$, using $K^{1/\ell}\ge k^{1/\ell}/2$, we obtain  a time bound of
\begin{equation*}
  \T \le \frac{4n}{k^{1/\ell}} + 2^{\ell+1}(\ell-1+c_1(k^{1/\ell}))D^{1+1/\ell},
\end{equation*}
yielding the runtime bound announced in Theorem~\ref{th:BFDN-ell} since $c_1(k^{1/\ell})=2+\min\set{\log(\Delta),\log(k)/\ell}$.
\end{proof} 

\section*{Acknowledgement}
RC thanks Maxime Cartan for careful reading and for implementing \texttt{DFBN} in Python. This work was supported by ANR-19-P3IA-0001 (PRAIRIE 3IA Institute) and ANR Tempogral ANR-22-CE48-0001.

\bibliography{biblio}

\newpage

\appendix
\section{Comparisons between Algorithms \texttt{CTE}, \texttt{Yo}* and \texttt{BFDN}}\label{sec: conclusion}
We provided in Figure \ref{fig:CTEvsBMRDFS} a picture of how  \texttt{BFDN} compares in terms of runtime with other state-of-the art algorithms for collaborative tree exploration. The regions in the picture are defined up to multiplicative constants that only depend on $k$. We decided to include in this picture only algorithms for which guarantees are achieved irrespective of assumptions on the tree structure and that are the most efficient for some values of size $n$ and depth $D$. This leaves us only with three algorithms: the original ``collaborative tree exploration'' \texttt{CTE} algorithm of \cite{fraigniaud2006collective} with runtime  $\Ocal(\frac{n}{\log(k)}+D)$, the  recursive \texttt{Yo}* algorithm of \cite{OrtolfS14} with runtime  $\Ocal(2^{\Ocal(\sqrt{\log D \log\log k})}\log k(\log n+\log k)(n/k+D))$, which we reduced to smaller quantities to simplify the picture, and \texttt{BFDN} with runtime $2n/k+D^2\log(k)$ as well as its recursive variant $\texttt{BFDN}_\ell$. 

Figure \ref{fig:CTEvsBMRDFS} highlights that \texttt{BFDN} is the only algorithm to outperform \texttt{CTE} of \cite{fraigniaud2006collective} in an unbounded range of parameters $(n,D)$. Indeed, the other competitor, \texttt{Yo}*, is dominated by \texttt{CTE} when $n\geq e^k$ or when $D\geq e^{\log(k)^2}$. Yet, \texttt{CTE} remains the most efficient algorithm for trees with small depth, i.e. satisfying $D^2\le o_k(n)$, and its competitive ratio in $k/\log(k)$ is not uniformly surpassed by any known algorithm.

We now briefly detail the calculations that justify the picture in Figure \ref{fig:CTEvsBMRDFS}.

\paragraph{Comparison between \texttt{BFDN} and \texttt{CTE}.} Since the runtime of any collaborative tree algorithm exceeds $n/k$ and $D$, it is sufficient to compare the suboptimal terms of both algorithms which are  $D^2\log(k)$ and $n/\log(k)$ for \texttt{BFDN} and \texttt{CTE} respectively. It therefore turns out that \texttt{BFDN} is faster than $\texttt{CTE}$ in the range $D^2\log(k)^2\leq n$.  

\paragraph{Comparison between \texttt{CTE} and \texttt{Yo}*.} First, we simplified the runtime of \texttt{Yo}* to $\Ocal(\log(n)n/k+D)$, which gives that it can outperform the $\Ocal(n/\log(k)+D)$ of \cite{fraigniaud2006collective} only in the range $n\leq e^{k/\log(k)}$ which we extend to $n\leq e^{k}$ in the picture. After, we simplified the runtime of \texttt{Yo}* to $\Ocal(e^{\sqrt{\log(D)}}n/k+D)$ to obtain the range $D\leq e^{\log(k)^2}$. Finally, we simplified the runtime of \texttt{Yo}* to $D\log(n)\log(k)$ to get that \texttt{CTE} outperforms \texttt{Yo}* for  
trees satisfying $D\geq \frac{n}{\log(n)}\log(k)^2$.

\paragraph{Comparison between \texttt{BFDN} and \texttt{Yo}*.} We used the comparisons above for $e^k\leq n$ or $e^{\log(k)^2}\leq D$, and completed by the following simplification of the runtime of \texttt{Yo}* to $\Ocal(\log(k)n/k+D)$. \texttt{BFDN} is thus faster than \texttt{Yo}* when $\log(k)D^2\leq\log(k)n/k$, that is when $kD^2\leq n/k$.

\paragraph{Comparison between $\texttt{BFDN}_\ell$ and $\texttt{CTE}$.} We note that $\texttt{BFDN}_\ell$ may outperform $\texttt{CTE}$ only if $k^{1/\ell}>\log(k)$, or equivalently if $\ell <\frac{\log(k)}{\log(\log(k))}$, which we assumed in the caption of the Figure. Under this condition, $\texttt{BFDN}_\ell$ outperforms $\texttt{CTE}$ if $2^\ell \log(k)D^{1+1/\ell}<\frac{n}{\log(k)}$. Since we have $2^\ell <k$, this condition is met if $D<\frac{1}{k\log(k)^2}n^{\ell/(\ell+1)}$.

\paragraph{Comparison between $\texttt{BFDN}_\ell$ and $\texttt{BFDN}$.} If $n/k> D^2$, if is clear that $\texttt{BFDN}$ outperforms $\texttt{BFDN}_\ell$. On the other hand, if $n/k^{1/\ell}< D^2$, $\texttt{BFDN}_\ell$ outperforms $\texttt{BFDN}$.
\newpage

\section{Formal description of Anchor-based Invariants}
\label{app:anchor-based}

During the execution of an anchor-based algorithm, it is required that
the partially explored tree, the set $A\subseteq [k]$ of active robots, the anchor assignment $(v_i)_{i \in A}$, and the positions $(u_i)_{i\in [k]}$ of the robots always satisfy the following invariants:
\begin{itemize}
    \item all open nodes of the currently explored tree are in $\cup_{i\in [k]} P_T[u_i]$, \hfill (DFS Open Coverage)
    \item for any two robots $i\not= j$, all nodes in $P_T(\lca_T(u_i,u_j))$ are closed,\hfill(Parallel Positions)
    \item for all active robot $i$ such that $v_i\in P_T[u_i]$, all edges in the path from $v_i$ to $u_i$ are half explored, \hfill (Partial Exploration)
    \item  for all active robot $i\in A$, $\depth(v_i)\le d$, \hfill(Limited Anchor Depth)
    \item all inactive robots are located at depth at most $d$, \hfill(Inactive Depth)
    \item all open nodes of the currently explored tree are in $\cup_{i \in A} T(v_i)$, \hfill(Open Node Coverage)
    \item if $\exists i\in A$ such that either $\depth(v_i) < d$ or 
      $v_i$ is open, then  at least $k^*$  robots are active, \hfill(Shallow Activity)
    \item if all anchors $\{v_i : i\in A\}$ are at depth $d$ and are close, each active robot triggers an edge event at each round. \hfill(Deep Activity)
\end{itemize}

Initially, robots are said to be in \emph{Parallel DFS Positions} when DFS Open Coverage, Parallel Positions and Partial Exploration are all three satisfied when assuming that all robots are active and anchored at the root. One can easily check that other invariants are then also satisfied.

\paragraph{Properties of an anchor-based algorithm.}
The Open Node Coverage invariant implies that all nodes at depth less than $d'$ are closed where $d'=\min_{i\in A}\depth(v_i)$ is the minimum depth of an anchor. 
The Shallow Activity invariant implies that the number of active robots may decrease below $k^*$ only when all anchors are at depth $d$ and consequently when all nodes up to depth $d$ are closed. The Open Node Coverage  invariant also implies that for any dangling edge adjacent to a discovered node $w$, there exists at least one active robot $i$ such that $w$ is in $T(v_i)$. This implies that if all anchors are at depth $d$ and if $i$ is the last robot with anchor $v_i$, it cannot become inactive unless $T(v_i)$ has been completely explored. This indeed implies that the algorithm cannot terminate unless the full tree has been completely explored: as long as there remains an open node $w$, some robot $i$ must be active with an ancestor of $w$ as anchor. Recall that we require that the algorithm cannot terminate unless all robots are inactive. 

\paragraph{BFDN}
$\DFBNmath_1(k,k,d)$ is an anchor-based algorithm.
Indeed, the Open Node Coverage invariant is shown as Claim~4; the DFS Open Coverage and Partial Exploration invariants come from the similarity of \texttt{DN} moves with a DFS traversal, while the Parallel Positions invariant comes from the selection of distinct dangling edges when several robots are located at the same node.  The Limited Anchor Depth and Inactive Depth invariants are satisfied by the modification of anchor selection. The Shallow Activity invariant comes from the fact that all robots are active as long as there remain some dangling edge at depth at most $d$. Finally, the Deep Efficiency invariant comes from Claim~5 as when the algorithm runs deep, each sub-tree at depth $d+1$ which is not completely discovered contains exactly one robot performing a DFS-like traversal of the sub-tree.

We also note that we can start  $\DFBNmath_1(k,k,d)$ from any partially explored tree where robots are in Parallel DFS Positions as long as each robot $i$, which is in a position $u_i$ with open ancestors, gets anchored to a node $v_i$ of $P[u_i]$ such that all nodes of $P(v_i)$ are closed. Such a situation occurs in \texttt{BFDN} when a robot is performing \texttt{DN} moves. It is thus possible to start a robot in any such situation so that it will then behave similarly as in \texttt{BFDN}. The other robots see only closed nodes and thus get to the root according to Algorithm~\ref{alg:BFDN_algo} where they get re-anchored.

\section{Divide-depth Algorithm}\label{sec:pseudocode_div_depth}

\begin{algorithm}[H]
\caption{Divide depth algorithm $\Dcal[\Acal(k^*,k',d'); n_{team}; n_{iter}]$}\label{alg:divide-depth}
\begin{algorithmic}[1]
\Require  An anchor-based exploration algorithm $\Acal(k^*,k',d')$, integers $n_{team}\ge k^*$ and $n_{iter}\ge 1$, a partially explored tree $T$ with $k=n_{team}k'$ robots in Parallel DFS Positions and such that at most $k^*$ robots are at depth greater than $0$. 
\Ensure All nodes are discovered and closed.

\State $R\gets \set{\texttt{root}(T)}$\Comment{Set of sub-tree roots in next iteration.}
\State $A \gets \set{i\in [k] : u_i\not= \texttt{root}(T)}$\Comment{Set of robots having already progressed in $T$.}
\State All robots are active and have $\texttt{root}(T)$ as anchor.

\For{$i=1,\ldots,d/d'$}\\
  \Comment{Iteration $i$:\hfill\hfill\hfill\hfill\hfill\hfill\hfill\hfill\hfill\hfill\hfill\hfill\hfill\hfill\hfill\hfill\hfill\hfill\hfill\hfill\hfill\hfill\hfill\hfill\hfill\hfill\phantom{.}} 
  \State For all $r\in R$, let $k_r = \card{\{i \in A : v_i = r\}}$ be the number of robots having progressed in $T(r)$.
  \State Partition robots into $|R|$ teams $(B_r)_{r \in R}$ of $k'$ robots each, one per node $r\in R$: 
    \State \quad each robot $i\in A$ is assigned to $v_i$,
    \State \quad for all $r\in R$, $k'-k_r$ robots in $[k]\setminus A$ are assigned to $r$.\Comment{We rely on $k_r\le k'$ and $|R|\le n_{team}$.}
  \State All robots in team $B_r$ are assigned to anchor $r$: we set $v_i\gets r$ for all $i\in B_r\setminus A$.
  \State All robots in $\cup_{r\in R}B_r\setminus A$ are turned to active, and move to their anchor in $2(i-1)d'$ rounds.\label{line:dd-iteration-move} \Comment{Moves for rebalancing robots.}
  \State All robots in $[k]\setminus \cup_{r\in R}B_r$ are turned to inactive and wait at their current position.
  \State Each team associated to $r\in R$ initializes independently an instance $\Acal_r(k^*,k',d')$ for exploring $T(r)$.
  \State At any round, we let $A_r$ denote the set of active robots among the team exploring $T(r)$.
  \While{$\card{\cup_{r\in R}A_r}\ge k^*$} \label{line:dd-inner-while-beg}
    \State Run in parallel one round of all instances $\Acal_r(k^*,k',d')$ for $r\in R$.
  \EndWhile \label{line:dd-inner-while-beg}
  \State $A\gets \card{\cup_{r\in R}A_r}$\Comment{Overall set of active robots.}
  \State $R\gets\set{v_i : i\in A}$ \Comment{Roots of sub-trees not fully explored yet.}
\EndFor
\State Continue running instances $\Acal_r(k^*,k',d')$ of the last iteration for all $r\in R$. \Comment{Running deep.}  
\end{algorithmic}
\end{algorithm}

\begin{proof}[Proof of Proposition~\ref{prop:divide-depth}]
  We first check that all invariants are preserved by induction on the iteration number $i$. The main argument is that all anchors are at depth $i\cdot d'$ after Iteration~$i$.
  We require that the DFS Open Coverage, Parallel DFS Positions and Partial Exploration invariants are satisfied by the initial positions of robots. All remaining invariants are also satisfied as the only initial anchor is at depth zero. 
  Assume that all invariants are satisfied up to the beginning of Iteration~$i$, and that nodes in $R$ are at depth $(i-1)d'$.
  
  The Inactive Depth invariant ensures that inactive robots at the end of the previous iteration are at depth $(i-1)d'$ or less, and moving them according to Line~\ref{line:dd-iteration-move} can indeed be done within $2(i-1)d'$ rounds. Moreover, the Open Node Coverage invariant ensures that all nodes at depth less than $(i-1)d'$ are closed, and these movements preserve the DFS Open Coverage and Parallel  Positions invariants. The Partial Exploration invariant is also preserved since these robots are not located in the sub-tree of their anchor. These $(i-1)d'$ rounds also preserve Anchor Depth and Open Node Coverage invariants as the anchors $R$ of nodes active in the last round of the previous iteration remain their anchor, while other nodes are assigned to one of the anchors in $R$.

  The fact that robots are initially in Parallel DFS Positions in each instance $\Acal_r(k^*,k',d')$ for $r\in R$ comes from the preservation of the DFS Open Coverage, Parallel Positions, and Partial Exploration invariants at the end of the previous round as the root $r$ was the anchor of robots that are not located at $r$.
  Now, as all instances $\Acal_r(k^*,k',d')$ for $r\in R$ run in disjoint sub-trees, the DFS Open Coverage, Parallel Positions, Partial Exploration, Anchor Depth and Open Node Coverage invariants are also preserved during the rest of the iteration since each $\Acal_r(k^*,k',d')$ is anchor-based.
  Similarly, the Inactive Depth invariant is satisfied as its variant in instances $\Acal_r(k^*,k',d')$ imply that inactive nodes are at depth $(i-1)d'+d'=i\cdot d'\le d$ at most. The Shallow Activity invariant is preserved as long as at least one instance $\Acal_r(k^*,k',d')$ is not running deep according to the Shallow Activity invariant for that instance. This means that the number of overall active robots can drop below $k^*$ only when all instances are running deep, implying that all anchors are then at depth $(i-1)d'+d'=i\cdot d'$. Note that the Open Node Coverage invariant then implies that all open nodes are in the sub-trees rooted at the anchors of the robots that were active in the last round. The exploration can thus be reduced to these at most $k^*$ sub-trees as claimed in the description of the divide depth functor.

Finally, the algorithm starts running deep only when all anchors are at depth $d$ and are all closed. This can happen only towards the end of the last iteration when all instances are running deep. The reason is that if an instance is not running deep, it has at least $k^*$ active robots by the Shallow Activity invariant and the termination condition of the inner while loop at Line~\ref{line:dd-inner-while-beg} is not met.   The Deep Activity invariant then follows from the fact that instances are running in pairwise disjoint sub-trees and all satisfy the Deep Activity invariant.

  This achieves the proof that $\Dcal[\Acal(k^*,k',d'); n_{team}; n_{iter}]$ is correct and that it is an anchor-based exploration algorithm.

\smallskip

The proof for $f$-shallow efficiency is given in Section~\ref{sec:trading}.
\end{proof}

\end{document}